\documentclass[preprint,12pt,authoryear]{elsarticle}

\usepackage[margin=0.9in]{geometry}
\usepackage[utf8]{inputenc}
\usepackage[english]{babel}
\usepackage{amsmath,amssymb,amsthm}
\allowdisplaybreaks 
\usepackage{graphicx}
\usepackage{float}
\usepackage{makecell}
\usepackage{algorithm}
\usepackage{algpseudocode}
\usepackage{booktabs}
\usepackage{xltabular}
\usepackage{xcolor}
\usepackage{multirow}
\usepackage{longtable}
\usepackage{tabularx}
\usepackage[hidelinks]{hyperref}
\usepackage{caption}

\biboptions{authoryear}

\newtheorem{proposition}{Proposition}
\theoremstyle{remark}
\newtheorem{remark}{Remark}

\journal{Preprint}

\begin{document}

\begin{frontmatter}

\title{Three-sided mobility-energy market design as a multiperiod stochastic assignment game}

\author[nyu]{Hai Yang}
\author[nyu]{Joseph J. Y. Chow\corref{cor1}}
\cortext[cor1]{Corresponding author}
\ead{joseph.chow@nyu.edu}

\address[nyu]{C2SMART Center, Department of Civil, Urban, and Environmental Engineering, New York University Tandon School of Engineering, 6 MetroTech Center, Brooklyn, NY 11201, USA}

\begin{abstract}
As mobility service providers (MSPs) and energy providers (EPs) expand electric vehicle ecosystems, models are needed to understand their interactions within a three-sided market. Existing frameworks often overlook the temporal interdependencies between mobility and charging demands. We address this gap by proposing a bilevel problem as an assignment game overseen by a market regulator. The upper level optimizes service pricing to maximize platform profitability. The lower level models a multi-stakeholder equilibrium using a scalable, link-based Perturbed Utility Route Choice (PURC) framework. The evaluation time frame is divided into discrete intervals, capturing the temporal lag between mobility and charging demand via an empirical affine function. We solve the model by chronologically decomposing the lower level into interacting mobility service and recharge subnetworks. Numerical experiments on the expanded Nguyen-Dupuis network reveal several key insights. First, a critical charging capacity threshold exists; operating below it forces a severe reduction in the deployable fleet and creates localized transit deserts. Second, modeling endogenous operating costs reveals a concave profit trajectory, demonstrating that total profit maximizes at a specific fleet size just before market saturation. Third, optimal dynamic pricing operates within a narrow range, where peak pricing acts as a steady revenue driver and off-peak pricing serves as a highly sensitive operational buffer. These findings provide actionable strategies for coordinating fleet sizing and charging infrastructure deployment.
\end{abstract}

\end{frontmatter}

\section{Introduction}\label{sec:Intro}

Sustainable cities must reduce non-renewable energy use in transportation, which accounts for 25\% of global output \citep{conti2016international}. Yet electrification alone is insufficient. Prior studies show that uncoordinated charging can substantially increase peak electricity demand, while the environmental benefits of electric vehicles remain sensitive to the energy mix and operating conditions \citep{van2011energy, ma2012evaluation, huo2015life}. A key sustainability strategy is therefore to shift travel toward more efficient, renewable-oriented modes, including shared and pooled mobility services and micromobility. However, these services remain difficult to operate sustainably because long charging times and limited charging infrastructure raise operating costs \citep{jung2014stochastic, jung2019effects}, and their performance varies across built environments, creating challenges such as the first- and last-mile problem \citep{chang1991multiple}.

In a mobility--energy ecosystem, charging demand at one time of day is driven by mobility demand at other times, while the cost and capacity of energy provision affect mobility fares, traveler choices, and, in turn, energy-provider profitability. This interdependence fundamentally changes the mathematical structure of the market: once energy providers are treated as strategic actors, the conventional two-sided market of travelers and mobility providers becomes a three-sided market. Modeling and governing such a system therefore requires a framework that jointly captures traveler routing, mobility service assignment, and physical energy constraints.

Existing literature does not yet provide such an integrated approach. Instead, two related but largely parallel streams have developed. One focuses on EV fleet planning, routing, and charging infrastructure, showing that electrification introduces additional spatiotemporal constraints, longer service interruptions, and stronger dependence on charging-facility availability \citep{li2024toward}. Studies on electric routing and fleet operations show that charging behavior shapes feasible routing and service decisions \citep{montoya2017electric}, and that fleet operations, battery depletion, and charging choices must be considered jointly in electrified systems \citep{jung2014stochastic, jung2019effects, pantelidis2022node, yi2021framework, kullman2022dynamic, zhan2022simulation}. Planning studies likewise integrate transportation and power considerations into charging infrastructure and fleet design \citep{nourinejad2016equilibrium, wang2018coordinated, zhang2018optimal, chen2020optimal}. More recent work shows that charging stations and battery swapping may serve as complementary facilities for electrified ride-hailing fleets \citep{lai2024multimodalcharging}, and that fleet management, pricing, and energy costs are jointly shaped in coupled transport--power planning settings \citep{ding2022pricingfleet}. Together, these studies show that charging is a core system-design component rather than a secondary operational detail.

A second stream examines mobility market design, platform pricing, operator competition, and traveler participation. MaaS and MOD studies use assignment-game and bilevel formulations to represent strategic interactions among users, operators, and platforms \citep{daou2024modelling, rasulkhani2019route, pantelidis2020many, liu2024demand, yao2024design, xi2024single, bandiera2024mobility}. 

An assignment game, unlike a noncooperative game, is an assignment problem that also determines the cost transfers between players to sufficiently incentivize them to participate in the coalition. In a one-to-one stable matching assignment game without network effects, the problem can be represented as a simple linear program where the dual variables can be used to determine allocations \citep{shapley1971assignment}. \cite{liu2024demand} showed that for multimodal networks, such assignment games can be represented in a bilevel framework, which can further be extended to a unique set of cost transfers when the lower level matches are stochastically assigned with endogenous capacities \citep{liu2024modeling}. \citet{yang2025bilevel} show that a bilevel mobility-hub design model with a lower-level link-based perturbed utility route choice (PURC) assignment can achieve strong computational scalability. These studies provide an important methodological foundation by endogenizing pricing, capacity allocation, cost transfers, and traveler participation within a platform-based equilibrium. However, they remain largely two-sided, as charging-related decisions and the strategic role of energy providers are omitted or only indirectly represented.

Recent electrified mobility studies move closer to integration. Data-driven studies identify the spatiotemporal distribution of charging demand from travel demand or ride-hailing activity \citep{zhang2020shareduse, li2022trajectory, jin2025chargingdemand}. Other work shows that charging cost in electric ride-hailing depends not only on electricity price but also on waiting time, and that these charging costs directly affect the equilibrium between service supply and demand \citep{chen2024modeling}. Studies of electric autonomous MOD systems similarly emphasize that charging demand is linked to passenger demand and shaped by both platform and government decisions \citep{gao2024pluginswap}. On the infrastructure and operations side, the energy sector is increasingly modeled as an active decision maker \citep{ding2022pricingfleet, cai2024optimizing, zhao2025charging}. More broadly, \citet{garcia2019state} conceptualize eMaaS as an ecosystem of interdependent participants, while \citet{xi2024strategizing} examine how EV-related incentives affect electric mobility use under competition.

Despite this progress, the literature still lacks a unified network-based equilibrium framework that jointly models mobility assignment and recharge assignment over multiple periods with explicit strategic interactions among travelers, mobility providers, and energy providers. We address this gap by proposing a three-sided assignment game with a bilevel framework for a mobility--energy market with explicit time-of-day interdependencies between mobility demand and charging demand. The upper level represents a market regulator, interpreted as either a public agency or a private platform hosting the three groups of market participants, and determines cost-transfer decisions to maximize market profit distributed between mobility and energy providers. The lower level captures the equilibrium decisions of travelers, service operators, and energy providers across multiple periods. To represent the temporal link between mobility demand and charging demand, we propose using an affine function to relate them across periods as shown to be effective by \citet{jin2025chargingdemand}. Although the present model focuses on fixed charging stations as a starting point, it provides a foundation for future extensions to other charging technologies like mobile charging and battery swapping.

The problem is computationally challenging because, relative to earlier bilevel mobility design models such as \citet{yang2025bilevel}, it introduces time-of-day coupling, energy-provider decision variables, and fleet-positioning decisions associated with recharging. To address this complexity, we retain the scalable link-based PURC formulation of \citet{fosgerau2022perturbed}, as adopted in \citet{yang2025bilevel}, and further decompose the lower-level equilibrium into two interconnected subnetworks: a mobility service network for passenger movement and a recharge service network for charging and fleet rebalancing. This decomposition preserves the key strategic and physical interactions while keeping the model tractable for planning analysis.

We make three contributions. First, we develop a network-based bilevel framework as an electrified mobility market assignment game that explicitly integrates mobility assignment and recharge assignment over multiple time periods. Second, we extend the predominantly two-sided mobility market-design literature by endogenizing energy providers within a three-sided market. Third, we propose a scalable decomposition-based solution method and evaluate it on an expanded Nguyen--Dupuis network to analyze both lower-level equilibrium behavior and upper-level profit optimization.

\section{Proposed model and solution method}

We propose a strategic planning model assuming perfect information symmetry between the regulator and the operators, not mechanism design. As such, the detailed mechanisms to ensure truth-telling (i.e. incentives compatibility) are outside the scope of this study; however, we do need to consider individual rationality (IR) of travelers and operators.

\subsection{Network structure and model assumptions}

We consider a network served by one or more MOD services $m \in M$ with exclusive access to one or more energy providers $e \in E$, regulated by a platform $\mathcal{P}$. The regulator can be a public agency or a private enterprise, or even the mobility service if it is the sole mobility service in the platform. Other mobility services and modes are considered to be out-of-platform options. The services operate in a cyclic operating horizon, such as a 24-hour cycle, which is partitioned into a set of discrete time intervals $\mathcal{T}$. Each interval $t \in \mathcal{T}$ exhibits a distinct, steady-state travel demand  $\mathbf{q}_t$ that includes users of both the platform mobility services $\mathbf{q}^I_t$ and out-of-platform options $\mathbf{q}^O_t$, where $\mathbf{q}^O_t + \mathbf{q}^I_t = \mathbf{q}_t$. 

\subsubsection{Cyclic temporal horizon}
The mobility--energy platform is represented as a multilayer network composed of two interconnected components: mobility service subnetworks $G$ and recharge service subnetworks $\tilde{G}$, as shown in \textbf{Fig.~\ref{fig:eMaaS network}}. This structure captures the spatial and temporal coupling between EV-based mobility service and charging service. It is not intended to represent dynamic traffic effects such as spillovers or FIFO conditions; rather, it provides a macroscopic steady-state representation of time-of-day interactions.

The mobility service subnetworks, shown as blue layers in \textbf{Fig.~\ref{fig:eMaaS network}}, represent the spatial assignment of mobility flows in each time interval under steady-state conditions. Although each layer has its own demand pattern and cost structure, together they represent the variation in urban mobility over the day. To model a cyclic steady-state day, the demand pattern in the first interval, $\mathbf{q}_1$, is assumed to follow that of the final interval, $T$. This assumption allows the model to capture overnight capacity carryover and long-run system consistency.

Between consecutive mobility layers are recharge service subnetworks, shown as green layers in \textbf{Fig.~\ref{fig:eMaaS network}}. These subnetworks serve two purposes: they assign charging demand spatially to designated stations and redistribute fleet capacity as vehicles move between active service and charging states. In this way, the network captures the temporal trade-off between service availability and vehicle state-of-charge. The arc definitions and mathematical properties of both subnetworks are introduced below.

\begin{figure}[h]
    \centering
    \includegraphics[width=0.8\textwidth]{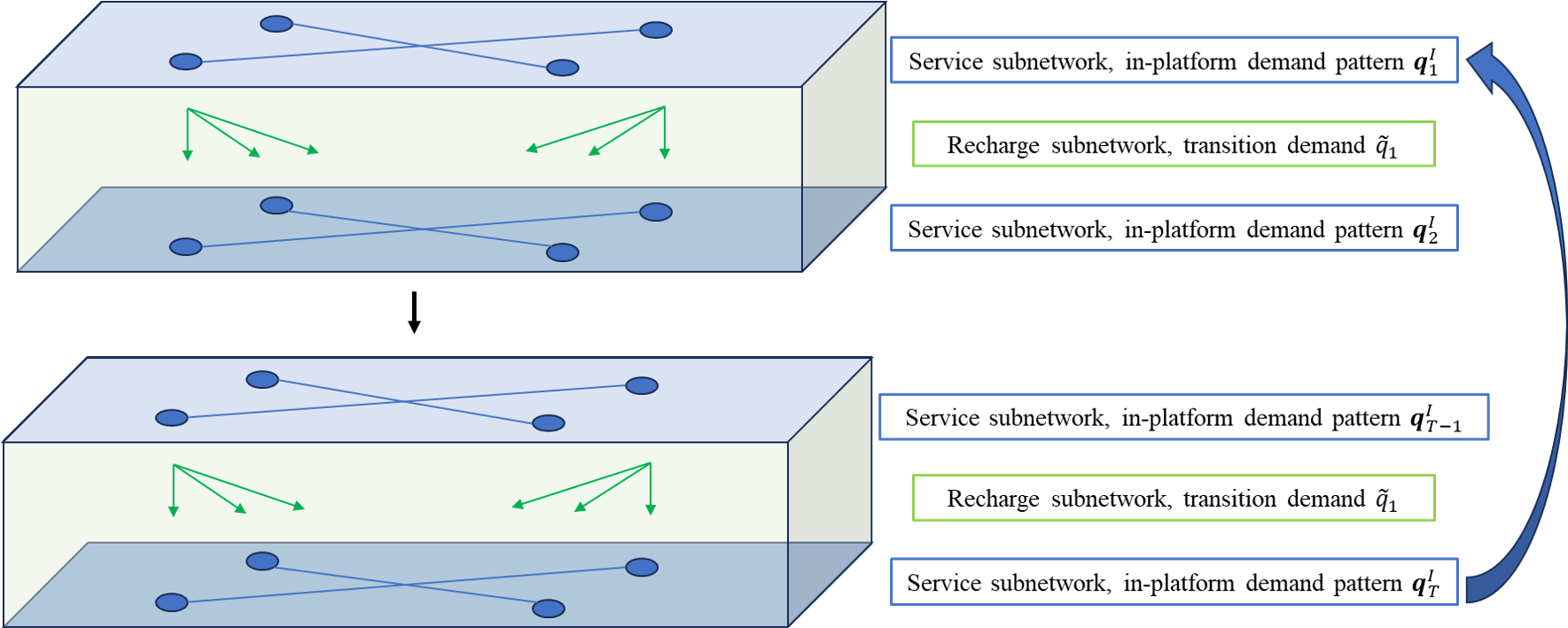}
    \caption{Illustration of mobility-energy platform network structure}
    \label{fig:eMaaS network}
\end{figure}

\subsubsection{Mobility service subnetwork}

We first describe the mobility service subnetwork $G$. Although each time layer $t \in \mathcal{T}$ has different node capacities and demand levels, the underlying topology is identical across all intervals. \textbf{Fig.~\ref{fig:service network}} illustrates the detailed structure of one such layer within the broader framework of \textbf{Fig.~\ref{fig:eMaaS network}}.

The service network is denoted by $G=(N,A)$, where $N$ is the set of service-related nodes and $A$ is the set of service links used to satisfy origin--destination (OD) demand. Demand is aggregated at centroid nodes $\{O \cup D\} \subseteq N$. The network supports two categories of service: the platform's on-demand MOD services and a composite out-of-platform (OOP) option. The OOP option is represented by the lower subgraph in \textbf{Fig.~\ref{fig:service network}}, consisting of links $A_f$ (brown links) between centroids.

In parallel, each MOD fleet $m \in M$ operates its own subnetwork $G_m$, represented by the upper subgraph in \textbf{Fig.~\ref{fig:service network}}. This subgraph contains mobility service nodes $N_m$ (blue nodes) and service links $A_m$ (solid blue links). These links describe average travel performance between nodes rather than real-time routing outcomes, consistent with macroscopic ride-pooling network models \citep{yang1998network, xu2021equilibrium, yang2025bilevel}. Travelers move between the physical demand subgraph and each MOD subgraph through access and egress links, shown as dashed blue arrows. The access links represent the average waiting cost through a "meeting function" \citep{yang2010equilibria, liu2024demand}. When multiple MOD operators compete, each is represented by a separate service subgraph connected to the common centroids through its own access and egress links.

To represent operational constraints such as cruising and congestion, MOD access links are modeled as store-and-forward links with service queues. Congestion effects, including waiting delay for vehicle pickup, are governed by the capacity of the access links. Accordingly, each MOD operator $m$ chooses the service capacity of its access links $l \in A_m^+$ in each interval $t$, thereby limiting inflow to the service network and capturing queueing delays caused by supply--demand imbalance.

\begin{figure}[h]
    \centering
    \includegraphics[width=0.7\textwidth]{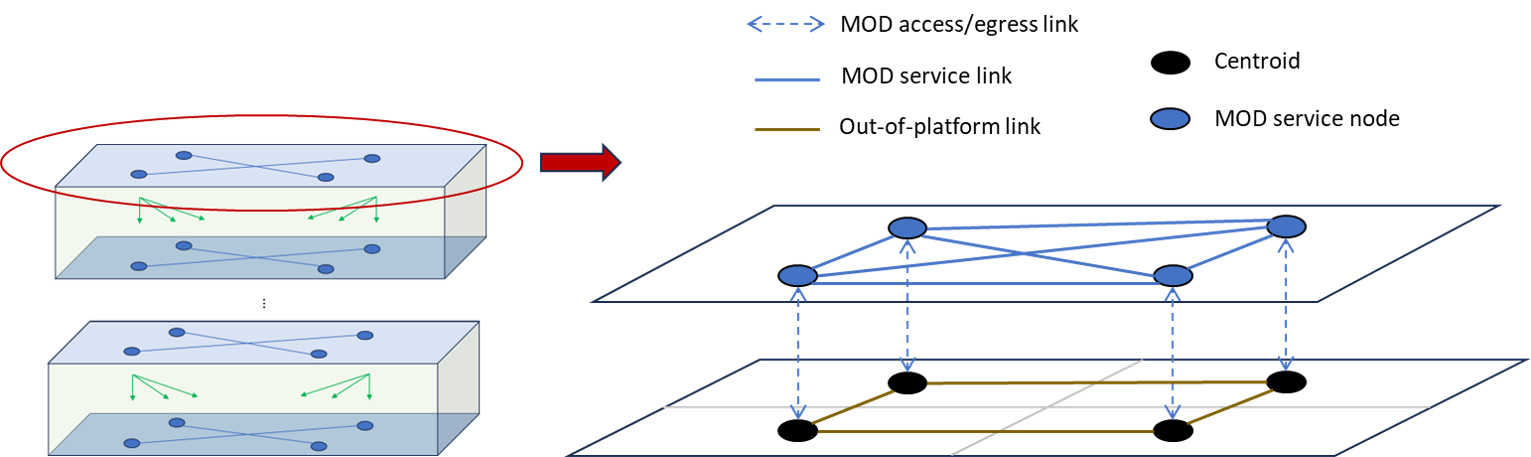}
    \caption{Illustration of mobility service subnetwork structure}
    \label{fig:service network}
\end{figure}

\subsubsection{Charging service subnetwork}

As shown in \textbf{Fig.~\ref{fig:eMaaS network}}, the recharge service subnetwork $\tilde{G}$ forms the temporal interface between two consecutive mobility service layers. Charging is modeled as a discrete transition phase linking operations in interval $t$ to those in interval $t+1$. This construction discretizes continuous time while preserving the flow conservation needed for multiperiod fleet management.

The main purpose of this subnetwork is to model redistribution flow: the movement of average vehicle presence needed to satisfy different steady-state capacity requirements across time intervals, together with additional idle vehicles caused by charging or demand--supply imbalance. This differs from \cite{yang2025bilevel}, which assumes a time-invariant steady-state fleet distribution over service nodes, and from rebalancing models that track individual vehicle locations \citep{nourinejad2015vehicle, sayarshad2017non}. Instead, the present framework assumes that an operator's steady-state fleet distribution changes over time and must be redistributed accordingly to preserve flow capacity. The charging subnetwork therefore defines OD pairs connecting every mobility service node in layer $t$ to every mobility service node in layer $t+1$ for each operator $m \in M$. These flows ensure that the average fleet capacity assigned to each MOD node $i \in N_m$ is properly transferred and conserved between consecutive intervals.

Within this structure, redistribution can occur along two path types, illustrated in \textbf{Fig.~\ref{fig:charging network}}. Vehicles that require charging follow charging-based links (solid green), passing through a charging node $i \in \tilde{N}_E \subseteq \tilde{N}$ before reaching their destination in layer $t+1$. Vehicles that do not need charging use fleet redistribution links (dashed purple arrows) to move directly between service nodes across intervals. This topology includes both direct cross-period links and indirect paths through charging infrastructure, thereby coupling spatial relocation with energy management. Redistribution patterns can be calibrated through link weights. For example, nighttime charging policies or redistribution limited to nearby zones can be represented by assigning high weights to daytime charging links or to transfers toward distant service nodes.

A complete list of sets, parameters, and decision variables is provided in as follows. Key notation is also introduced inline when first used in the model.

\begin{itemize}
    \item $\mathcal{T}$: Set of discrete time intervals representing a cyclic planning horizon.
    \item $\mathcal{M}$: Set of Mobility-on-Demand (MOD) fleet operators.
    \item $E$: Set of energy providers/charging operators.
    \item $\mathcal{N}$: Set of service network nodes.
    \item $\mathcal{\tilde{N}}$: Set of charging network nodes.
    \item $\mathcal{N}_m$: Subset of service nodes managed by operator $m$.
    \item $\mathcal{\tilde{N}}_e$: Subset of charging station nodes managed by energy provider $e$.
    \item $\mathcal{A}$: Set of service network links.
    \item $\tilde{\mathcal{A}}$: Set of recharge network links.
    \item $\mathcal{A}_m^a$: Subset of MOD access links.
    \item $\tilde{\mathcal{A}}_i^a$: Subset of links entering charging node $i$.
    \item $\mathcal{S}$: Set of Origin-Destination (OD) pairs for travelers.
    \item $s^{+}, s^{-}$: Origin and destination of OD pair $s\in S$.
    \item $\tilde{\mathcal{S}}$: Set of OD pairs for fleet redistribution in the recharge network.
    \item $\tilde{s}^{+}, \tilde{s}^{-}$: Origin and destination of redistribution OD pair $\tilde{s} \in \tilde{\mathcal{S}}$.
\end{itemize}

\subsection*{Parameters and Input Variables}
\begin{itemize}
    \item $q_s^t$: Travel demand for OD pair $s \in \mathcal{S}$ in time interval $t \in \mathcal{T}$.
    \item $\pi_{t}^{\tau}$: Propagation coefficient of the autoregression function connecting service demand in interval $\tau$ to interval $t$
    \item $v_l$: Maximum physical capacity scaling factor for access link $l$.
    \item $h_i$: Maximum physical capacity scaling factor for node $i$.
    \item $d_l$: Length of link $l$.
    \item $c_l$: Operating cost per unit distance on link $l$.
    \item $g_l$: Cost per unit capacity on MOD access link $l$
    \item $c_i$: Cost per unit capacity at node $i$.
    \item $\delta_{il}$: Node-link incidence matrix indicator ($1$ if link $l$ enters node $i$, $-1$ if it leaves, $0$ otherwise).
    \item $V_m$: Total fleet size available for operator $m$. 
    \item $w_{o}$: Weight between mobility service and recharge service.
    \item $w_{sd}$: Mobility service dispersion weight.
    \item $w_{uc}$: Traveler utility weight.
    \item $w_{oc}$: Operator utility weight.
    \item $w_{rd}$: Recharge dispersion weight.
    \item $w_{ocr}$: Operator recharge utility weight.
    \item $w_{rs}$: Recharge station utility.
\end{itemize}

\subsection*{Decision Variables}
\begin{itemize}
    \item $p_l^{t}$: Service link pricing for all link $l \in A$ operated by $m\in M$ (\textbf{Upper level}) (note that charging pricing, i.e., cost allocation between mobility provider and charging provider, is assumed exogenous and set equal to the operating cost)
    \item $x_{s,l}^t$: Unit flow for OD $s\in \{S \cup \tilde{S}\}$ on link $l\in A$ in interval $t\in \mathcal{T}$ (\textbf{Lower level})
    \item $z_{l}^{t}$: Percent of max capacity allocated for service node $l\in A_m$ for $m\in M$ for $t\in \mathcal{T}$ (\textbf{Lower level})
    \item $\mu_{i}^{t}$: Percent of max deployed fleet size for service node $i\in N_m$ for $m \in M$ for $t \in \mathcal{T}$ (\textbf{Lower level})
    \item $u_{i}^{t}$: Percent of max capacity allocated for charging station node $i\in N_e$ for $e \in E$ for $t \in \mathcal{T}$ (\textbf{Lower level})
    \item $r_{ij}^{t,m}$: Redistribution of expected mobility service fleet (shortened as redistribution flow in the following context) between node $i$ and node $j$ for operator $m$. (\textbf{Lower level})
\end{itemize}

\begin{figure}[!htbp]
    \centering
    \includegraphics[width=0.7\textwidth]{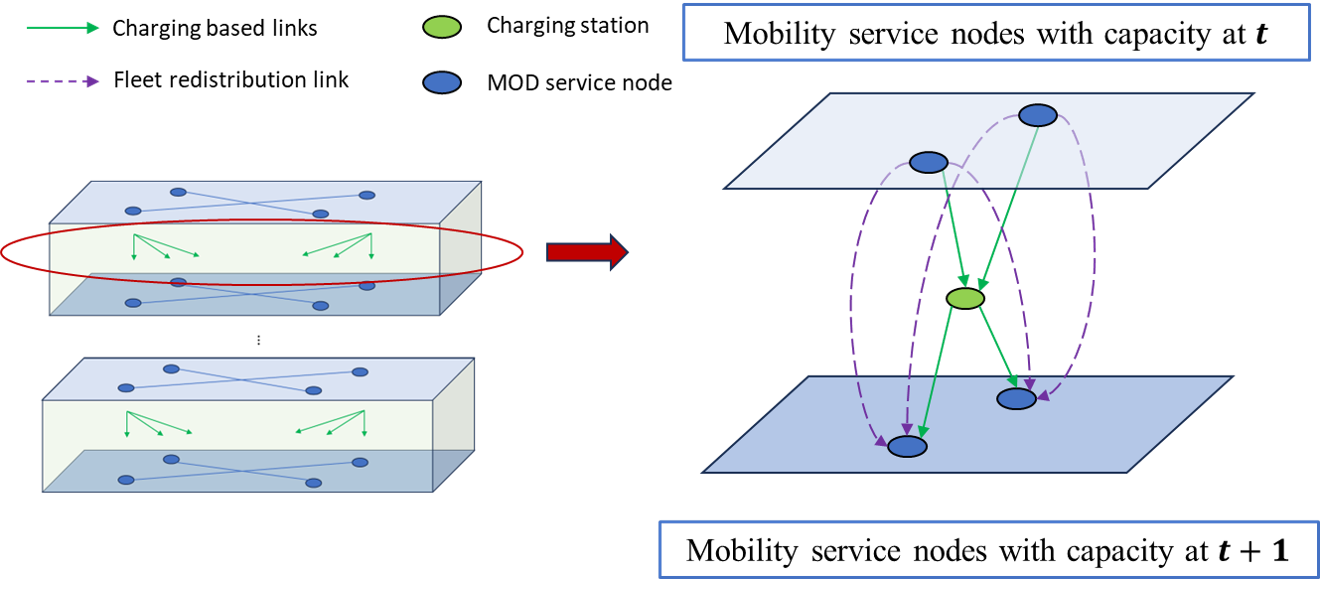}
    \caption{Illustration of recharge service subnetwork structure}
    \label{fig:charging network}
\end{figure}

\subsection{Lower level formulation}

\begin{subequations}\label{eq:lower_service}
\begin{align}
&\min_{\mathbf{x,z,u,r},\boldsymbol{\mu}} \Phi_1 = \Phi_{S} + w_o \Phi_{R} \label{eq:lower_obj} \\
&\Phi_{S} = \sum_{t \in \mathcal{T}} \Biggl[ \sum_{s \in \mathcal{S}, l \in \mathcal{A}} d_l \Bigl( w_{uc}\mathcal{U}(x_{s,l}^t) 
+ w_{sd}\mathcal{F}(x_{s,l}^t) \Bigr) \nonumber\\
&\quad + w_{oc} \sum_{s \in \mathcal{S}, l \in \mathcal{A}} d_l c_l q_s^t x_{s,l}^t + w_{oc} \sum_{m \in \mathcal{M}, l \in \mathcal{A}_m^a} g_l v_l z_l^t \Biggr] \label{eq:service_object} \\
&\Phi_{R} = \sum_{t \in \mathcal{T}} \Biggl[ \sum_{\tilde{s} \in \tilde{\mathcal{S}}, \tilde{l} \in \tilde{\mathcal{A}}} d_{\tilde{l}} \Bigl( w_{ocr}\tilde{\mathcal{U}}(x_{\tilde{s},\tilde{l}}^t) + w_{rd}\mathcal{F}(x_{\tilde{s},\tilde{l}}^t) \Bigr) \nonumber\\
&\quad + w_{rs} \sum_{e \in E, i \in \mathcal{N}_e} c_i h_i u_{i,e}^t \Biggr] \label{eq:recharge_object}\\
&\text{s.t.} \quad \sum_{l \in \mathcal{A}} \delta_{i,l} x_{s,l}^t
=
\begin{cases}
-1, & i = s^+, \\
 1, & i = s^-, \\
 0, & \text{otherwise},
\end{cases} \nonumber
\\
&\quad \forall s \in \mathcal{S},\ t \in \mathcal{T},\ i
\label{eq:service_flow}
\\
&\sum_{s \in \mathcal{S}} x_{s,l}^t q_s^t
\le
z_l^t v_l, \nonumber \\
&\quad
\forall l \in \mathcal{A}_m^a,\ m \in \mathcal{M},\ t \in \mathcal{T}
\label{eq:service_capacity}
\\
&\mu_{i=l^-}^t
\ge z_l^t,
\quad
\forall l \in \mathcal{A}_m^a,\ m \in \mathcal{M},\ t \in \mathcal{T}
\label{eq:node_deploy}
\\
&0 \le z_i^t
\le 1,
\quad
\forall i \in \mathcal{N}_m,\ m \in \mathcal{M},\ t \in \mathcal{T}
\label{eq:node_capacity}
\\
&0 \le \mu_i^t
\le 1+\epsilon,
\quad
\forall i \in \mathcal{N}_m,\ m \in \mathcal{M},\ t \in \mathcal{T}
\label{eq:node_deploy_act}
\\
&\sum_{l \in \mathcal{A}_m^a} v_l \mu_{i=l^-}^t
\le V_m,
\quad
\forall m \in \mathcal{M},\ t \in \mathcal{T}
\label{eq:fleet_cap}
\\
&V_m
\ge
\kappa
\sum_{\tau \in \mathcal{T}} \sum_{l \in \mathcal{A}_m}
\pi_t^\tau v_l z_l^\tau +
\sum_{l \in \mathcal{A}_m} v_l z_l^{t+1,m}, \nonumber \\
&\quad
\forall m \in \mathcal{M},\
t \in \mathcal{T}\setminus\{|\mathcal{T}|\}
\label{eq:buffer_shift_1}
\\
&V_m
\ge
\kappa
\sum_{\tau \in \mathcal{T}} \sum_{l \in \mathcal{A}_m}
\pi_{|\mathcal{T}|}^\tau v_l z_l^\tau +
\sum_{l \in \mathcal{A}_m} v_l z_l^{1,m}, \nonumber \\
&\quad
\forall m \in \mathcal{M}
\label{eq:buffer_shift_2}
\end{align}

\begin{align}
&0 \le x_{s,l}^t
\le 1,
\quad
\forall l \in \mathcal{A},\ s \in \mathcal{S},\ t \in \mathcal{T}
\label{eq:flow_constraint}
\\
&v_l \mu_{i=l^-}^t
=
\sum_{j \in \mathcal{N}_m} r_{i=l^-,j}^{t,m},
\nonumber\\
&\qquad
\forall l \in \mathcal{A}_m^a,\ m \in \mathcal{M},\ t \in \mathcal{T}
\label{eq:rebal_out}
\\
&\sum_{i \in \mathcal{N}_m} r_{i,j=l^-}^{t,m}
=
v_l \mu_{j=l^-}^{t+1,m},
\nonumber\\
&\qquad
\forall l \in \mathcal{A}_m^a,\ m \in \mathcal{M},\
t \in \mathcal{T}\setminus\{|\mathcal{T}|\}
\label{eq:rebal_in_1}
\\
&\sum_{i \in \mathcal{N}_m} r_{i,j=l^-}^{|\mathcal{T}|,m}
=
v_l \mu_{j=l^-}^{1,m},
\nonumber\\
&\qquad
\forall l \in \mathcal{A}_m^a,\ m \in \mathcal{M}
\label{eq:rebal_in_2}
\\
&q_{\tilde{s}}^t
=
r_{\tilde{s}^+,\tilde{s}^-}^{t,m},
\nonumber\\
&\qquad
\forall \tilde{s} \in \tilde{\mathcal{S}},\
m \in \mathcal{M},\ t \in \mathcal{T}
\label{eq:recharge_od}
\\
&r_{\tilde{s}^+,\tilde{s}^-}^{t,m}
\ge 0,
\nonumber\\
&\qquad
\forall \tilde{s} \in \tilde{\mathcal{S}},\
m \in \mathcal{M},\ t \in \mathcal{T}
\label{eq:recharge_flow_large}
\\
&\sum_{\tilde{l} \in \tilde{\mathcal{A}}}
\delta_{i,\tilde{l}} x_{\tilde{s},\tilde{l}}^t
=
\begin{cases}
-1, & i=\tilde{s}^+,\\
 1, & i=\tilde{s}^-,\\
 0, & \text{otherwise},
\end{cases}
\nonumber\\
&\qquad
\forall \tilde{s} \in \tilde{\mathcal{S}},\ t \in \mathcal{T},\ i
\label{eq:recharge_flow}
\\
&\sum_{\tilde{l} \in \tilde{\mathcal{A}}_i^a}
\sum_{\tilde{s} \in \tilde{\mathcal{S}}}
x_{\tilde{s},\tilde{l}}^t q_{\tilde{s}}^t
=
h_i u_{i,e}^t,
\nonumber\\
&\qquad
\forall i \in \mathcal{N}_e,\ e \in E,\ t \in \mathcal{T}
\label{eq:recharge_capacity}
\\
&\sum_{e \in E} \sum_{i \in \mathcal{N}_e} h_i u_{i,e}^t
=
\sum_{m \in \mathcal{M}} \sum_{\tau \in \mathcal{T}}
\sum_{l \in \mathcal{A}_m^a}
\pi_{t,\tau} v_l z_i^\tau,
\nonumber\\
&\qquad
\forall t \in \mathcal{T}
\label{eq:recharge_demand}
\end{align}

\end{subequations}

The complete lower-level model encompasses both the mobility service capacity allocation, traveler assignment, capacity redistribution, as well as charging service capacity allocation and charging assignment. The three-sided market decision-makers collectively make simultaneous matching decisions, and the coalitions of these matches are being stochastically chosen through the PURC framework as a coalitional choice model. We first present the integrated formulation for the entire lower level. The full lower-level formulation is shown in \textbf{Eq.~\ref{eq:lower_service}}.

\subsubsection{Mobility service network subproblem}

The objective function \textbf{Eq.~\ref{eq:service_object}} minimizes the total perturbed utility, representing a weighted aggregation of the disutility of the three-sided market coalitions of travelers, mobility operators, and charging operators along with the perturbed utility term across all service intervals $t$. The term $\mathcal{U}(x_{s,l}^t)$ denotes the link-additive traveler disutility calculated from flow $x_{s,l}^t$. $\mathcal{F}(x_{s,l}^t)$ is a convex perturbation function capturing the dispersion effect in route choice as established in \cite{fosgerau2022perturbed}. In this study, we use the form of $x^{2}$ as suggested in \cite{fosgerau2022perturbed} for simplicity, but it can be replaced with other recommend convex functions (e.g., $(x+1)ln(x+1)-x$). The second component accounts for the mobility providers' costs including link operating cost and access link capacity cost. All terms are scaled by their corresponding coefficient $w_{uc}$, $w_{sd}$, and $w_{oc}$. These tunable parameters balance operator and traveler disutilities, as well as the dispersion effect depicted by $\mathcal{F}(\mathbf{x})$, effectively representing a coalition as discussed in \cite{liu2024modeling} and \cite{yang2025bilevel}, but designed to capture three sides. Since the charging costs are embodied by the mobility operator in matching with the travelers, only one parameter is used for the subcoalition of mobility and charging providers. Operator disutility encompasses both link operating and fixed capacity costs: link costs are the product of unit flow $x_{s,l}^t$, demand $q_{s}^t$, link length $d_l$, and unit cost $c_l$; fixed capacity costs are derived from the decision variable $z_l^{t}$ and the access link capacity scaler $v_l$.

The constraints ensure network physical feasibility and consistent loading. \textbf{Eq.~\ref{eq:service_flow}} enforces flow conservation for every OD pair $s$ at each node $i$ during interval $t$, using the node-link incidence indicator $\delta_{i,l}$. \textbf{Eq.~\ref{eq:service_capacity}} imposes a critical access link capacity constraint, mandating that the service flow using access links $l \in \mathcal{A}_m$ does not exceed the capacity allocated by the operator. This constraint functions as the mechanism for generating endogenous congestion delays when demand approaches the allocated supply as shown in \cite{liu2024modeling} and \cite{yang2025bilevel}.

Constraints \textbf{Eq.~\ref{eq:node_deploy}--\ref{eq:buffer_shift_2}} govern the interaction between service capacity and fleet availability. \textbf{Eq.~\ref{eq:node_deploy}} defines $\mu_i^{t}$ as the physical fleet resources actually deployed to serve the access links $l\in A_m$ where node $i$ is the sink node of $l$, and it must meet or exceed the allocated service capacity $z_l^{t}$. $z_l^{t}$ captures the active fleet currently engaged in passenger service, while $\mu_i^{t}$ represents the total fleet present in the area, including cruising, idling, or charging vehicles. \textbf{Eq.~\ref{eq:node_capacity}} and \textbf{Eq.~\ref{eq:node_deploy_act}} governs the range of $\mathbf{\mu}$ and $\mathbf{z}$. $\mathbf{\epsilon}$ is an input parameter that determines the level of excessive volume allowable to stage at service nodes. \textbf{Eq.~\ref{eq:fleet_cap}} ensures that the total deployed fleet for each operator $m$ does not exceed the total fleet size $V_m$ in any interval. Finally, the system must maintain a sufficient buffer to account for vehicles unavailable due to charging. In \textbf{Eq.~\ref{eq:buffer_shift_1}-\ref{eq:buffer_shift_2}}, the first term of the Right-Hand Side (RHS) is an empirical affine function of multiple intervals of the platform's mobility demand to determine the charging demand. This represents a temporal macroscopic relationship between mobility demand and charging demand that can be empirically fitted to data. $\pi_{t}^{\tau}$ is the propagation coefficient for the interval pair $(t, \tau)$, and $\kappa$ represents the percentage of charging demand deemed unavailable for the subsequent interval. These coefficients can be calibrated to either observed or simulated data of such dynamics. The second term captures the total service capacity of mobility fleet $m$ in the subsequent interval. \textbf{Eq.~\ref{eq:buffer_shift_1}} represents when $t$ is not the last interval, and \textbf{Eq.~\ref{eq:buffer_shift_2}} reflects the cyclic manner when $t$ is the last interval. The sum of the two represents the actual fleet size that is required for the mobility service constraints, which shall be smaller than or equal to the total deployable fleet $V_m$. \textbf{Eq.~\ref{eq:flow_constraint}} restricts the normalized flow variables to the valid range $[0, 1]$.

The mobility service network subproblem adopts a structure analogous to the lower-level problem proposed in \cite{yang2025bilevel}, which demonstrates that this formulation yields a flow assignment model characterizing a traveler-operator coalition within the PURC framework. Interested readers are referred to \cite{yang2025bilevel} for the complete mathematical proof. The weights $w_{uc}$ and $w_{oc}$ governs the relative influence of each stakeholder on the resulting flow assignment behavior. Furthermore, the relative weight between $w_{sd}$ and $w_{uc}, w_{oc}$ regulates the magnitude of the dispersion effect, as established in \cite{fosgerau2022perturbed}. These three parameters can be calibrated using empirical data observed during each defined service interval.

\begin{proposition} \label{prop_1}
The mobility service network subproblem provides a standalone PURC-based flow assignment model with a traveler-operator coalition when the link utility function $u(\mathbf{x})$ is linear.
\end{proposition}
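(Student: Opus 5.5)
The plan is to show that, once the capacity variables $\mathbf{z},\boldsymbol{\mu}$ are fixed at their optimal values, the mobility service subproblem (objective \eqref{eq:service_object} under constraints \eqref{eq:service_flow}--\eqref{eq:flow_constraint}) is, OD by OD and interval by interval, a link-based PURC problem in the sense of \cite{fosgerau2022perturbed}. I will then argue that the joint optimization over $\mathbf{z},\boldsymbol{\mu}$ only adds operator costs and endogenous link prices, and that this is exactly a traveler--operator coalition as in \cite{yang2025bilevel}. The proof follows their argument, extended to multiple periods.

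\textbf{Step 1 (separability).} When $\mathcal{U}$ is linear, say $\mathcal{U}(x^t_{s,l})=u_l^t x^t_{s,l}$ with $u_l^t$ including the price $p_l^t$, every term of $\Phi_S$ is a sum over $t$. The only coupling across periods in the standalone subproblem comes through \eqref{eq:buffer_shift_1}--\eqref{eq:buffer_shift_2}, and these involve only $\mathbf{z}$. So for fixed $\mathbf{z}$ the flow problem splits into independent problems, one for each $(s,t)$ and one for each access-link capacity coupling. I then form the Lagrangian, attaching multipliers $\lambda_l^t\ge 0$ to the capacity constraint \eqref{eq:service_capacity}. Within each $t$, the flow part becomes
\[
\sum_{s,l} d_l\Bigl[\bigl(w_{uc}u_l^t + w_{oc}c_l q_s^t\bigr)x^t_{s,l} + w_{sd}\mathcal{F}(x^t_{s,l})\Bigr] + \sum_{l\in\mathcal{A}_m^a}\lambda_l^t q_s^t x^t_{s,l}.
\]
This has the form of a linear generalized link cost plus a strictly convex perturbation, minimized over the unit-flow polytope \eqref{eq:service_flow}, \eqref{eq:flow_constraint}. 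That is precisely the link-based PURC route choice problem for OD $s$.

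\textbf{Step 2 (PURC properties).} I will invoke \cite{fosgerau2022perturbed}: with $\mathcal{F}(x)=x^2$ strictly convex and the feasible set a polytope, the problem has a unique solution. Its KKT conditions give link-choice probabilities satisfying flow conservation, and these are equivalent to a perturbed utility model over paths with dispersion scale $w_{sd}/w_{uc}$. Linearity of $u$ is what makes the generalized cost additive, and hence representable as a link-based PURC. This is where the hypothesis of the proposition enters. The multiplier $\lambda_l^t$ acts as an endogenous waiting or queue cost on the access links, which reproduces the congestion interpretation.

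\textbf{Step 3 (coalition interpretation and main obstacle).} The remaining step is to show that the joint minimization over $(\mathbf{x},\mathbf{z},\boldsymbol{\mu})$ is a coalition choice problem. The weighted sum $w_{uc}(\text{traveler disutility})+w_{oc}(\text{operator cost})$ is the coalition's disutility, and the stationarity in $z_l^t$ yields $w_{oc}g_lv_l=\lambda_l^t v_l$ plus the multipliers of \eqref{eq:node_deploy} and the buffer constraints. Together these equate the marginal capacity cost with the shadow queue cost, as in the assignment-game dual of \cite{shapley1971assignment} and \cite{liu2024modeling}. I expect the main obstacle to be the intertemporal buffer constraints \eqref{eq:buffer_shift_1}--\eqref{eq:buffer_shift_2}, because they couple $z^\tau$ across periods and break per-period separability. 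I would handle this by keeping them in the Lagrangian with multipliers $\eta_{m,t}\ge 0$. They then contribute only additional linear capacity prices $\kappa\pi_t^\tau v_l\eta_{m,t}$ on $z_l^\tau$, which leave the flow subproblem in the PURC form. Convexity of the whole problem (linear constraints, convex objective) gives strong duality, so these multipliers exist and the argument closes.
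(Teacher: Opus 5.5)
Your proposal is correct and follows essentially the paper's argument. The paper also forms the Lagrangian of the service subproblem and reads the PURC structure off $\mathbf{x}$-stationarity (linear link disutility, node-potential differences, the access-capacity multiplier and the bound multipliers). It then uses $\mathbf{z}$-stationarity to identify the capacity multiplier as the operator's marginal capacity cost plus the multipliers of \eqref{eq:node_deploy}, the buffer constraints and the bounds, which is what your Steps 1 and 3 do.

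Your version is more explicit than the paper's: you write out the per-OD PURC form and justify the multipliers by strong duality for a convex objective with linear constraints. Two small corrections. The per-$(s,t)$ split holds only after dualizing \eqref{eq:service_capacity}, not merely for fixed $\mathbf{z}$. The buffer constraints also put a $v_l\eta_{m,t-1}$ price on $z_l^t$ through their second summand.
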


\begin{proof}Let the objective function of the service subproblem be denoted as $\Phi_1(\boldsymbol{x, z,\mu, V})$. We construct the Lagrangian $\mathcal{L}$ for the mobility service subproblem:

\begin{equation}\label{eq:ec_lagrangian_r}
\begin{aligned}
&\mathcal{L}(\boldsymbol{x, z,\mu, V, \omega, \eta, \sigma, \rho, \iota, \chi, \upsilon }) \\ 
&\quad = \Phi_1(\boldsymbol{x, z,\mu, V}) + \sum\boldsymbol{\omega}\mathcal{K}(\mathbf{x}) + \sum \boldsymbol{\eta} \mathcal{V} \left(\boldsymbol{x, z} \right) \\
&\qquad + \sum \boldsymbol{\sigma} \mathcal{R}(\boldsymbol{z, \mu}) + \sum \boldsymbol{\iota} \mathcal{B}(\boldsymbol{\mu, V}) \\
&\qquad + \sum \boldsymbol{\chi} \mathcal{D}(\boldsymbol{z, V}) + \sum \boldsymbol{\upsilon} \mathcal{P}(\boldsymbol{x, \mu, z})
\end{aligned}
\end{equation}

Where $\boldsymbol{\omega}, \boldsymbol{\eta}, \boldsymbol{\sigma}, \boldsymbol{\iota}, \boldsymbol{\chi}, \boldsymbol{\upsilon}$ are the respective dual variables for the flow conservation, service node capacity, node coefficient, total fleet size, fleet buffer, and variable bound constraints. The Karush-Kuhn-Tucker (KKT) condition with respect to $\mathbf{x}$, given by $\nabla_{\mathbf{x}} \mathcal{L} = 0$, is written as:

\begin{equation}\label{eq:ec_lagrangian_service_x}
\begin{aligned}
&\nabla_{x} \mathcal{L} = \nabla_{x}\Phi_1(x) + \omega\nabla_{x}\mathcal{K}(x) + \eta\nabla_{x}\mathcal{V}(x) + \upsilon\nabla_x\mathcal{P}(x) = 0
\end{aligned}
\end{equation}

Since $\mathbf{x}$ is linearly connected with other decision variables, the first-order derivative with respect to $\mathbf{x}$ drops $\boldsymbol{z,\mu,V}$ entirely. Because the utility function $u(x)$ is linear, Eq.~\ref{eq:ec_lagrangian_service_x} implies that the optimal flow assignment $\mathbf{x}$ is determined by the marginal disutility of traversing service network links, the difference in Lagrangian multipliers of source and sink nodes for each link, the Lagrangian multipliers of the capacity constraints $\boldsymbol{\eta}$, and the Lagrangian multiplier of link flow upper bounds. The Lagrangian multipliers $\boldsymbol{\eta}$ can be obtained using $\nabla_{\mathbf{z}} \mathcal{L} = 0$:

\begin{equation}\label{eq:ec_lagrangian_service_mu}
\begin{aligned}
&\nabla_{z} \mathcal{L}  = \nabla_{z}\Phi_1(z) + \eta\nabla_{z}\mathcal{V}(z) + \sigma\nabla_{z}\mathcal{R}(z) + \chi\nabla_{z}\mathcal{D}(z) +  \upsilon\nabla_z\mathcal{P}(z)= 0 
\end{aligned}
\end{equation}

Rearranging Eq.~\ref{eq:ec_lagrangian_service_mu} yields:

\begin{equation}\label{eq:ec_lagrangian_service_mu_shift}
\eta = -\frac{\nabla_{z}\Phi_1(z) + \sigma\nabla_{z}\mathcal{R}(z) + \chi\nabla_{z}\mathcal{D}(z) +  \upsilon\nabla_z\mathcal{P}(z)}{\nabla_{z}\mathcal{V}(z)}
\end{equation}

Eq.~\ref{eq:ec_lagrangian_service_mu_shift} represents the marginal cost of service node capacity assignment along with the Lagrangian multipliers $\boldsymbol{\sigma, \chi, \upsilon}$, which are not related to the disutility values in the objective. By combining Eq.~\ref{eq:ec_lagrangian_service_x} and Eq.~\ref{eq:ec_lagrangian_service_mu_shift}, the optimal flow is decided by the marginal disutility of link usage from the coalition of travelers and operators, along with the marginal disutility of assigning service node capacity from the operator side. This is in the same structure as shown in Yang et al. (2026). This concludes the proof.
\end{proof}

\subsubsection{Recharge and redistribution network subproblem}

The second component of the lower-level model is the Recharge and Redistribution Network Subproblem. This formulation manages the transition phase between consecutive service of each pair of intervals $t$ and $t+1$, governing both the fleet repositioning and the allocation of charging infrastructure capacity. Unlike the mobility service subproblem, where demand is exogenous (traveler requests), the demand in this subproblem is endogenous. It is derived explicitly from the fleet redistribution requirements determined by mobility operators to maintain service levels while minimizing total system disutility. The formulation for the recharge and redistribution assignment is defined by the objective function \textbf{Eq.~\ref{eq:recharge_object}} and the constraints \textbf{Eq.~\ref{eq:flow_constraint}--\ref{eq:recharge_demand}}.

\textbf{Eq.~\ref{eq:recharge_object}} presents the objective function, which minimizes the combined disutility of fleet movement and charging capacity establishment. Analogous to \textbf{Eq.~\ref{eq:service_object}}, the first term represents the link-additive disutility of routing vehicles through the recharge network, where $x_{\tilde{s},\tilde{l}}^{t}$ denotes the flow on recharge network link $\tilde{l}$ for the redistribution pair $\tilde{s}$. The second term captures the cost incurred by energy providers, where $z_{i,e}^{t}$ is the capacity decision variable for charging station $i$ managed by provider $e$. The parameters $w_{rd}, w_{ocr}, w_{rs}$ govern the weight between fleet routing, charging infrastructure investment, and route dispersion effect in the recharge network.

The constraints ensure that fleet movements are physically feasible and that energy demands are satisfied. \textbf{Eq.~\ref{eq:rebal_out} - \ref{eq:rebal_in_2}} govern the vehicle movement between mobility service node pairs in $\mathcal{N}_m$ for each operator $m$ during the recharge interval. Here, $r_{i,j}^{t,m}$ represents the redistribution flow bridging the temporal progression between service interval $t$ and $t+1$. \textbf{Eqs.~\ref{eq:rebal_out}--\ref{eq:rebal_in_2}} maintain fleet size consistency across time intervals by ensuring that vehicle inflows and outflows match the deployed fleet, $\mu_i^{t}$. Specifically, \textbf{Eq.~\ref{eq:rebal_out}} mandates that the total flow departing from service node $i$ equals the deployed fleet at $i$ during interval $t$. Similarly, \textbf{Eqs.~\ref{eq:rebal_in_1}--\ref{eq:rebal_in_2}} ensure the flow entering node $j$ satisfies the fleet deployment decision at $j$ for the subsequent interval. Notably, these conservation constraints are governed by the deployed fleet $\mu_i^{t}$ rather than the active fleet $z_l^{t}$. Because the active fleet strictly reflects MOD service demand, its sum naturally fluctuates across time intervals. The total deployed fleet, however, must remain conserved. The gap between these two values quantifies resource redundancy, representing vehicles that are idling due to supply-demand mismatches or mandatory charging.

\textbf{Eq.~\ref{eq:recharge_od}} serves as the critical coupling constraint between the mobility service and recharge subproblems. It defines the demand $q_{\tilde{s}}^{t}$ for the recharge network as exactly equal to the redistribution flow $r_{i,j}^{t,m}$ from the mobility operator's fleet management decisions.

Finally, \textbf{Eq.~\ref{eq:recharge_flow}} enforces standard flow conservation within the recharge network topology, identical in structure to \textbf{Eq.~\ref{eq:flow_constraint}}. \textbf{Eq.~\ref{eq:recharge_capacity}} constrains the flow through charging stations, ensuring that the aggregate fleet flow using charging station $i$ does not exceed the capacity $h_i u_{i,e}^{t}$ allocated by the energy provider. Finally, \textbf{Eq.~\ref{eq:recharge_demand}} enforces a system-wide energy balance, requiring that the total allocated charging capacity across all energy providers must equal the aggregate charging demand generated by the mobility services.

We use \textbf{Proposition}~\ref{prop_2} and \textbf{Proposition}~\ref{prop_3} to show the characteristics of the recharge and redistribution network subproblem formulation.

\begin{proposition} \label{prop_2}
Each unique distribution of $\mathbf{\mathbf{r}}$ with fixed $\bar{\boldsymbol{\mu}}$, $\bar{\mathbf{z}}$ input to the recharge and redistribution subproblem results in a standalone PURC-based flow assignment model with a operator-energy provider coalition when the utility function $\tilde{u(x)}$ is linear.
\end{proposition}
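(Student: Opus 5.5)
Fixing $\bar{\boldsymbol{\mu}}$ and $\bar{\mathbf{z}}$, and taking a given feasible distribution $\mathbf{r}$, reduces the recharge and redistribution problem to the following. Minimize $\Phi_R$ over $(\tilde{\mathbf{x}}, \mathbf{u})$, subject to the flow conservation constraints Eq.~\ref{eq:recharge_flow}, the station capacity constraints Eq.~\ref{eq:recharge_capacity}, the system-wide energy balance Eq.~\ref{eq:recharge_demand}, and the bounds $0\le x_{\tilde{s},\tilde{l}}^t\le 1$ and $0\le u_{i,e}^t\le 1$. Once $\mathbf{r}$ is fixed, Eq.~\ref{eq:recharge_od} turns the OD demands $q_{\tilde{s}}^t$ into exogenous constants. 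The right-hand side of Eq.~\ref{eq:recharge_demand} depends only on $\bar{\mathbf{z}}$, so it is also a constant. Constraints Eq.~\ref{eq:rebal_out}--\ref{eq:rebal_in_2} hold identically and drop out. The remaining problem has the same structure as the mobility service subproblem in Proposition~\ref{prop_1}: a network flow problem with exogenous demand and a link-additive objective. That objective is $d_{\tilde{l}}\bigl(w_{ocr}\tilde{\mathcal{U}} + w_{rd}\mathcal{F}\bigr)$ plus a linear capacity cost $w_{rs}c_i h_i u_{i,e}^t$, where energy providers choose capacities through $\mathbf{u}$ in the same way operators choose them through $\mathbf{z}$. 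I will follow the Lagrangian/KKT template used in the proof of Proposition~\ref{prop_1}.

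\textbf{Step 1.} Write the Lagrangian with multipliers $\tilde{\boldsymbol{\omega}}$ for flow conservation, $\tilde{\boldsymbol{\eta}}$ for the station capacity equalities, $\boldsymbol{\lambda}$ for the energy balance, and $\tilde{\boldsymbol{\upsilon}}$ for the bounds. The feasible set is polyhedral and $\mathcal{F}(x)=x^2$ is strictly convex, so with $\tilde{u}$ linear the problem is convex and strictly convex in $\tilde{\mathbf{x}}$. KKT conditions are therefore necessary and sufficient, and the optimal $\tilde{\mathbf{x}}$ is unique. This justifies the phrase ``each unique distribution of $\mathbf{r}$'': every fixed $\mathbf{r}$ maps to exactly one assignment.

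\textbf{Step 2.} Take $\nabla_{\tilde{\mathbf{x}}}\mathcal{L}=0$. Link by link this gives
$$d_{\tilde{l}}\bigl(w_{ocr}\tilde{u}_{\tilde{l}} + 2w_{rd}x_{\tilde{s},\tilde{l}}^t\bigr) + \tilde{\omega}_{\tilde{l}^-} - \tilde{\omega}_{\tilde{l}^+} + \tilde{\eta}_i q_{\tilde{s}}^t\,\mathbb{1}[\tilde{l}\in\tilde{\mathcal{A}}_i^a] + \tilde{\upsilon} = 0.$$
Since $\tilde{u}$ is linear, this is the PURC optimality condition of \cite{fosgerau2022perturbed}. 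Flow on each link is an affine, truncated function of the link's generalized cost plus the node potential difference. The generalized cost includes a station-specific charging price $\tilde{\eta}_i$ on links entering charging nodes.

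\textbf{Step 3.} Take $\nabla_{\mathbf{u}}\mathcal{L}=0$, which gives
$$w_{rs}c_ih_i - \tilde{\eta}_i h_i - \lambda h_i + \tilde{\upsilon}_u = 0,$$
so
$$\tilde{\eta}_i = w_{rs}c_i - \lambda + \tilde{\upsilon}_u/h_i.$$
Thus the capacity shadow price equals the energy provider's marginal capacity cost, shifted by the system-wide energy-balance multiplier and the bound multipliers. Substituting this into Step 2 shows that routing through the recharge network is governed jointly by the marginal routing disutility borne by operators (weight $w_{ocr}$) and the marginal capacity cost borne by energy providers (weight $w_{rs}$). This is the operator--energy provider coalition, in the same form as the traveler--operator coalition of Proposition~\ref{prop_1}.

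\textbf{Main obstacle.} I expect the hardest part to be the coupling constraint Eq.~\ref{eq:recharge_demand}. It is an equality, not the inequality capacity constraint of Proposition~\ref{prop_1}, and it links all stations across providers. As a result, $\tilde{\eta}_i$ is not just a nonnegative congestion price. It can take either sign, and it is determined only up to the common shift $\lambda$. I would handle this by noting two things. First, $\lambda$ enters every charging link identically, so it only shifts the relative utility of charging paths versus direct redistribution paths. Second, the PURC interpretation needs only the additive link-cost structure, not the sign of the multipliers. I would also check feasibility: Eq.~\ref{eq:recharge_capacity} summed over stations must be consistent with Eq.~\ref{eq:recharge_demand} for the fixed $\bar{\mathbf{z}}$, which I will assume holds for admissible inputs.
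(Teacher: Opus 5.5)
Your proposal is correct and takes the same route as the paper. Fixing $\mathbf{r}$ turns $q_{\tilde s}^t$ into exogenous data, which reduces the subproblem to one in $(\tilde{\mathbf{x}},\mathbf{u})$ whose Lagrangian mirrors that of Proposition~\ref{prop_1}, so the optimal flow is driven jointly by operators' marginal routing disutility and energy providers' marginal capacity cost. Where the paper only asserts this analogy, you actually write out $\nabla_{\tilde{\mathbf{x}}}\mathcal{L}=0$ and $\nabla_{\mathbf{u}}\mathcal{L}=0$, eliminate $\tilde{\eta}_i$, and treat the energy-balance multiplier and the uniqueness of $\tilde{\mathbf{x}}$ explicitly, which fills in detail without changing the substance.
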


\begin{proof}
Let the objective function of the recharge subproblem be denoted as $\tilde{\Phi}_1(\mathbf{x, u})$. The optimization problem, conditional on fixed service decisions $\bar{\boldsymbol{\mu}}$ and $\bar{\mathbf{z}}$ along a unique distribution of $\bar{\mathbf{r}}$, reduces to minimizing $\tilde{\Phi}_1$ subject to the feasibility constraints. The corresponding Lagrangian $\tilde{\mathcal{L}}$ mimics the structure of the mobility service network subproblem's Lagrangian $\mathcal{L}$. This results in the optimal flow being decided by the coalition of marginal operator disutility of link usage and marginal energy provider disutility of allocating charging capacities, similar to Proposition 1. This concludes the proof.
\end{proof}

\begin{proposition}\label{prop_3}
Given fixed mobility service network decision variables $\bar{\boldsymbol{\mu}}$ and $\bar{\mathbf{z}}$, the binding constraints \textbf{Eq.~\ref{eq:rebal_out}--\ref{eq:recharge_od}} and \textbf{Eq.~\ref{eq:recharge_demand}} ensure that the resulting rebalancing flow vector $\mathbf{r^*}$ minimizes the operator disutility within the charging network.
\end{proposition}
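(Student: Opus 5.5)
With $\bar{\boldsymbol{\mu}}$ and $\bar{\mathbf{z}}$ fixed, I would treat the lower level as a problem in $(\mathbf{r},\tilde{\mathbf{x}},\mathbf{u})$ alone. Everything in $\Phi_S$ is then constant, so minimizing $\Phi_1$ reduces to minimizing $w_o\Phi_R$. The feasible set is also simpler. Eq.~\ref{eq:rebal_out}--\ref{eq:rebal_in_2} become fixed row and column marginals $v_l\bar\mu_i^t$ and $v_l\bar\mu_j^{t+1}$ for each operator $m$ and each transition $t\to t+1$. Together with Eq.~\ref{eq:recharge_flow_large}, they define a transportation polytope in $\mathbf{r}$. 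Eq.~\ref{eq:recharge_demand} fixes the aggregate charging capacity $\sum h_iu_{i,e}^t$ as a constant determined by $\bar{\mathbf{z}}$. Eq.~\ref{eq:recharge_od} identifies the recharge demand $q_{\tilde s}^t$ with $r^{t,m}_{\tilde s^+,\tilde s^-}$. The plan is to show that, at an optimum of the joint problem restricted to this set, the $\mathbf{r}$ component solves a problem whose objective is exactly the operator's recharge-network disutility.

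\textbf{Step 1: project out the inner variables.} For each feasible $\mathbf{r}$, I define $\Psi(\mathbf{r})$ as the minimum of $\Phi_R$ over $(\tilde{\mathbf{x}},\mathbf{u})$ subject to Eq.~\ref{eq:recharge_flow}, Eq.~\ref{eq:recharge_capacity} and Eq.~\ref{eq:recharge_demand}, with $q_{\tilde s}^t=r_{\tilde s}$ substituted in. By Proposition~\ref{prop_2}, for each such $\mathbf{r}$ this inner problem is a PURC assignment with an operator--energy provider coalition. It therefore has a well-defined minimizer $(\tilde{\mathbf{x}}^*(\mathbf{r}),\mathbf{u}^*(\mathbf{r}))$, unique in $\tilde{\mathbf{x}}$ because $\mathcal{F}(x)=x^2$ is strictly convex.

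\textbf{Step 2: remove the energy-provider cost from the $\mathbf{r}$-optimization.} I would argue that the term $w_{rs}\sum c_ih_iu_{i,e}^t$ is affected by $\mathbf{r}$ only through how flow is split among stations. Its aggregate is pinned by Eq.~\ref{eq:recharge_demand}, which depends only on $\bar{\mathbf{z}}$. The cleanest route is to take the KKT conditions of the full problem in $\mathbf{r}$. The stationarity condition for $r_{ij}^{t,m}$ reads
\[
\frac{\partial \Psi}{\partial r_{ij}^{t,m}} = \alpha_i^{t,m}+\beta_j^{t,m}+\gamma_{ij}^{t,m},
\]
with multipliers for the marginal constraints and for nonnegativity. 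Since the coupling Eq.~\ref{eq:recharge_od} is an equality, the envelope theorem gives $\partial\Psi/\partial r_{\tilde s}$ as the equilibrium OD cost of pair $\tilde s$. That cost is the operator's marginal link disutility $d_{\tilde l}w_{ocr}\tilde{\mathcal U}$ along used paths plus the station capacity shadow prices. These conditions are precisely the KKT conditions of minimizing operator recharge-network disutility over the transportation polytope, with station costs entering as marginal charging tariffs borne by the operator. This is consistent with the modeling assumption that charging prices equal operating cost.

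\textbf{Step 3: convexity and conclusion.} $\Psi$ is convex in $\mathbf{r}$, as a partial minimum of a jointly convex function over a convex set, given the linear $\tilde u$ and the bilinear term $x_{\tilde s,\tilde l}q_{\tilde s}$. Once that holds, the KKT conditions from Step 2 are sufficient, and $\mathbf{r}^*$ minimizes operator disutility within the charging network.

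The main obstacle is this convexity claim. The products $x_{\tilde s,\tilde l}^t q_{\tilde s}^t$ in Eq.~\ref{eq:recharge_capacity} and in the flow cost make the problem nonconvex in the pair $(\tilde{\mathbf{x}},\mathbf{r})$. I would first try changing variables to absolute flows $y_{\tilde s,\tilde l}=x_{\tilde s,\tilde l}r_{\tilde s}$. This makes the constraints linear and the linear cost terms linear. The perturbation $\mathcal F$ becomes the perspective $y^2/r$, which is jointly convex, and that would restore convexity. If the original unit-flow normalization must be kept, I would instead settle for the weaker conclusion that $\mathbf{r}^*$ is a KKT point, i.e.\ locally minimizes operator disutility.
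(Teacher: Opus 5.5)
Your setup matches the paper's: with $\bar{\boldsymbol{\mu}},\bar{\mathbf z}$ fixed, $\Phi_S$ is constant, Eq.~\ref{eq:recharge_od} makes $\mathbf r$ the recharge demand, and Eq.~\ref{eq:recharge_demand} depends only on $\bar{\mathbf z}$. Your Step~1 already contains the whole proof. The gap is that Steps~2--3 replace it with an argument that does not go through. The statement only concerns the $\mathbf r$-component of an optimum of the lower level with $\bar{\boldsymbol{\mu}},\bar{\mathbf z}$ fixed. Here ``operator disutility within the charging network'' means the generalized (coalitional) disutility $\Phi_R$, energy-provider term included. Then $\min_{\mathbf r,\tilde{\mathbf x},\mathbf u}\Phi_R=\min_{\mathbf r}\Psi(\mathbf r)$, and $\Psi(\mathbf r^*)=\Phi_R(\mathbf r^*,\tilde{\mathbf x}^*,\mathbf u^*)\le\Psi(\mathbf r)$ for every $\mathbf r$ in the transportation polytope. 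No convexity, differentiability or KKT sufficiency is needed. This is the paper's argument: once $\tilde{\mathbf q}=\mathcal M(\mathbf r)$ is substituted, optimality depends only on $(\tilde{\mathbf x},\mathbf u,\mathbf r)$, with $\mathbf r$ embedded in $\tilde{\mathcal L}$, so joint minimization compels $\mathbf r$ to minimize. Your Step~2 attempt to strip out $w_{rs}\sum c_ih_iu_{i,e}^t$ is therefore unnecessary.

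Steps~2--3 also rest on a misreading of Eq.~\ref{eq:recharge_object}. There $\tilde{\mathcal U}$ and $\mathcal F$ act on unit flows $x^t_{\tilde s,\tilde l}$ with no $q^t_{\tilde s}$ weight, unlike the operator cost in Eq.~\ref{eq:service_object}. So $\mathbf r$ enters $\Phi_R$ only through Eq.~\ref{eq:recharge_capacity}.

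\begin{itemize}
\item \textbf{The convexity repair fails.} Under $y=xr$ the linear cost becomes $\tilde{\mathcal U}_{\tilde l}\,y/r$ and the perturbation becomes $y^2/r^2$. Neither is jointly convex in $(y,r)$ on $r>0$. The perspective $y^2/r$ would require a demand-weighted perturbation $q_{\tilde s}\mathcal F(x)$, which is a different model.
\item \textbf{The envelope derivative is misidentified.} It is $\partial\Psi/\partial r_{\tilde s}=\sum_i\lambda_i\sum_{\tilde l\in\tilde{\mathcal A}_i^a}x^*_{\tilde s,\tilde l}$, where $\lambda_i$ is the multiplier of Eq.~\ref{eq:recharge_capacity}. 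Stationarity in $u_{i,e}$ pins $\lambda_i$ to $w_{rs}c_i$ plus the multiplier of Eq.~\ref{eq:recharge_demand} (up to sign and any active bounds on $u$). The operator's link disutility $d_{\tilde l}w_{ocr}\tilde{\mathcal U}$ does not appear, so this is not the OD equilibrium cost you describe. $\Psi$ need not be differentiable either.
\item \textbf{The fallback does not prove the claim.} ``$\mathbf r^*$ is a KKT point, i.e.\ locally minimizes'' is invalid for a nonconvex problem, and a KKT point would not give the minimization the statement asserts anyway.
\end{itemize}

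Convexity would matter only if you wanted to certify a locally computed $\mathbf r^*$, e.g.\ from the AM heuristic, as globally optimal. Neither your plan nor the paper's proof does that.
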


\begin{proof}The recharge network demand is defined as a function of the rebalancing flow $\mathbf{r}$. Specifically, the charging OD demand $\mathbf{\tilde{q}} = \mathcal{M}(\mathbf{r})$, where $\mathcal{M}$ is the incidence mapping relating redistribution OD pairs to flow variables. Consequently, the flow conservation constraints and capacity constraints become implicitly dependent on $\mathbf{r}$. In addition, the system-wide capacity constraint becomes only $\mathbf{z}$ dependent. The optimality of the system therefore only depends on variables ($\mathbf{x, u, r}$) with $\mathbf{r}$ being embedded inside the Lagrangian $\tilde{L}$. As a result, the binding constraints compel the rebalancing flow $\mathbf{r}$ along with $\mathbf{x}$ and $\mathbf{u}$ to distribute fleet size in a manner that minimizes the operator's generalized disutility in the charging network. This concludes the proof.
\end{proof}

The recharge network subproblem is nonconvex when ($\mathbf{x, u, r}$) are solved simultaneously. This is caused by the product of $\mathbf{x}$ and $\mathbf{q}$ in \textbf{Eq.~\ref{eq:recharge_capacity}}. This results in the combined lower level problem becoming a nonconvex Quadratic Program with Quadratic Constraints (QPQC). 

\subsection{Upper-Level Formulation}

The upper-level optimization problem represents the profit maximization strategy of the MOD service providers. In this leader-follower framework, operators determine the optimal link-based service pricing $\mathbf{p}$, anticipating the equilibrium response of travelers and operator decisions in the lower level. The formulation is shown in \textbf{Eq.~\ref{eq:upper_level}}.

The objective function \textbf{Eq.~\ref{eq:upper_obj}} maximizes the total system profit across all time intervals $t \in \mathcal{T}$ and operators $m \in \mathcal{M}$. The first and second term capture the service operation profit, which is calculated by the total service income minus the link operating cost and capacity allocation cost. The third term represents the charging related cost. In this way, the upper level objective function optimizes the MOD service profit that also considers the charging service. Welfare-based objectives can also be used for public platforms.

Constraints \textbf{Eq.~\ref{eq:upper_price}} regulate link based service prices $p_l^{t}$ must be less than the price cap $\hat{p_l}$. \textbf{Eq. \ref{eq:upper_op_prof}} represents the IR constraint for the MOD operator’s budget balance, ensuring that each operator $m$ remains profitable (or above a minimum cost threshold for public operators) while participating in the platform. The variables $\mathbf{x,z, u}$ are determined from the lower-level PURC assignment ($\Phi_1$), representing the optimal response on both traveler, operator, and energy provider to the upper-level pricing and capacity decisions.

\begin{subequations}\label{eq:upper_level}
\begin{align}
&\max_{\mathbf{p}} \Phi_0 = \sum_{t \in \mathcal{T}} \sum_{m \in \mathcal{M}} \sum_{s \in \mathcal{S}} \sum_{l \in \mathcal{A}_m} d_l (p_l^{t} - c_{l}) q_s^t x_{s,l}^t \notag \\
& \quad - \sum_{t\in \mathcal{T}} \sum_{m \in \mathcal{M}} \sum_{l \in \mathcal{A}_m^a} g_{l} v_l z_l^{t} \notag \\
& \quad - \sum_{t\in \mathcal{T}} \sum_{e \in E} \sum_{i \in \mathcal{N}_e}c_i h_i u_i^{t} \label{eq:upper_obj}\\
&\text{Subject to:} \notag \\
& 0 \leq p_l^{t} \leq \hat{p}_l \quad \forall l \in \mathcal{A}_m, \forall t \in \mathcal{T}, \forall m \in \mathcal{M} \label{eq:upper_price}\\
& 0 \leq \sum_{t \in \mathcal{T}} \sum_{s \in \mathcal{S}} \sum_{l \in \mathcal{A}_m}d_l(p_l^t-c_l)q_s^t x_{s,l}^t - \sum_{t\in \mathcal{T}} \sum_{l \in \mathcal{A}_m^a} g_{l} v_l z_l^{t} \nonumber \\
&\quad \forall m \in \mathcal{M} \label{eq:upper_op_prof}\\
&(\mathbf{x}, \mathbf{z}, \mathbf{u}) = \arg \min \Phi_1(\mathbf{p}) \label{eq:upper_lower}
\end{align}
\end{subequations}

\subsection{Proposed heuristic}
\subsubsection{Lower Level Lexicographic Decomposition}

\begin{figure}[htbp]
    \centering
    \includegraphics[width=0.6\textwidth]{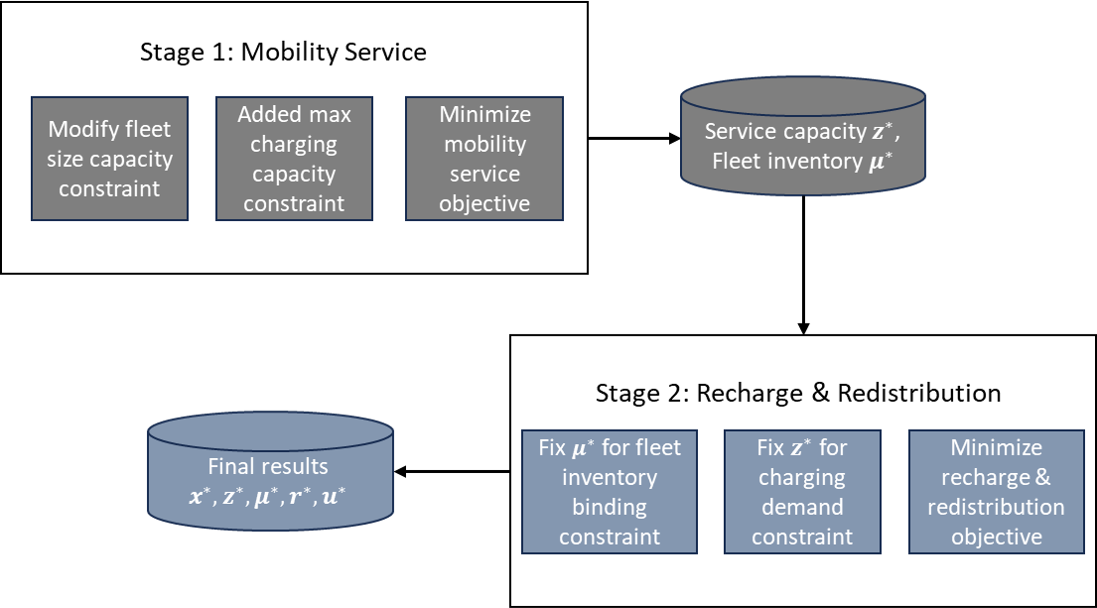}
    \caption{Decomposition heuristic flowchart}
    \label{fig:decomposition}
\end{figure}

The nonconvexity in the lower level QPQC problem leads to excessive computational complexity when facing larger-scale instances. To overcome this bottleneck, we propose a lexicographic decomposition heuristic strategy to tackle the complexity of the integrated lower-level model. This approach decouples the \textit{Service Assignment Problem} (Stage 1) from the \textit{Recharge and Redistribution Problem} (Stage 2), solving them sequentially as illustrated in \textbf{Fig.~\ref{fig:decomposition}}. The optimal fleet deployment states determined in Stage 1 serve as binding constraints for Stage 2. 

\paragraph{Stage 1: Mobility Service Optimization}\mbox{}

In the first stage, we solve the service network subproblem $\Phi_{S}$ in isolation, incorporating additional constraints to prevent feasibility violations in the subsequent stage. The decision variables are the passenger flows $\mathbf{x}$, service capacities $\mathbf{z}$, and fleet deployment $\boldsymbol{\mu}$.

We modify constraint \textbf{Eq.~\ref{eq:fleet_cap}} to \textbf{Eq.~\ref{eq:stage1_fleet}} to ensure that the total fleet size of each operator $m$ remains consistent across $\mathcal{T}$; otherwise, the redistribution flow balance in \textbf{Eq.~\ref{eq:rebal_out} - \ref{eq:rebal_in_2}} would be violated when passing $\boldsymbol{\mu}$ to the subsequent step. 

\begin{equation}
\sum_{l \in \mathcal{A}_m^a} v_l \cdot \mu_{i=l^-}^{t}  = V_m \quad \forall m \in \mathcal{M}, t \in \mathcal{T} \label{eq:stage1_fleet}
\end{equation}

Constraint \textbf{Eq.~\ref{eq:recharge_cap}} is introduced to ensure that the total charging demand derived from the regression function does not exceed the maximum allowable charging capacity.

\begin{equation}
\sum_{m \in \mathcal{M}} \sum_{\tau \in \mathcal{T}} \sum_{l \in \mathcal{A}_m^a} \pi_{t}^{\tau} v_l z_l^{\tau} \le \sum_{e \in E} \sum_{i \in \mathcal{N}_e}v_{i} \quad \forall t \in \mathcal{T} \quad \label{eq:recharge_cap}
\end{equation}

\paragraph{Stage 2: Recharge and Redistribution Optimization}\mbox{}

We fix $\mathbf{z}^*$ and $\boldsymbol{\mu}^*$ obtained from stage 1 and treat them as input parameters. In the second stage, we optimize the recharge and redistribution network subproblem $\Phi_{R}$. This stage determines the redistribution flows $\mathbf{r}$, the network flow assignment $\tilde{\mathbf{x}}$, and recharge capacities $\mathbf{u}$.

\textbf{Eq.~\ref{eq:stage2_supply} - \ref{eq:stage2_demand_2}} explicitly couple the two stages, ensuring that the redistribution flows $\mathbf{r}$ match the fleet inventory changes dictated by the optimal service solution $\boldsymbol{\mu}^*$. \textbf{Eq.~\ref{eq:stage2_recharge_demand}} uses $\mathbf{z^*}$ to regulate the total charging capacity.

\begin{subequations} \label{eq:decomp_stage2}
\begin{align}
& (\mathbf{r}^*, \tilde{\mathbf{x}}^*, \mathbf{u}^*) = \arg \min_{\mathbf{r}, \tilde{\mathbf{x}}, \mathbf{z}_{e}} \Phi_{R} \\
& \qquad \text{Subject to:} \notag \\
& \text{Constraints } \eqref{eq:recharge_flow} - \eqref{eq:recharge_capacity} \notag \\
& \sum_{j \in \mathcal{N}_m} r_{i=l^-,j}^{t,m} = v_l \cdot {\mu^{t*}_{i=l^-}} \quad \forall l \in \mathcal{A}_m^a, t \in \mathcal{T} \label{eq:stage2_supply} \\
& \sum_{i \in \mathcal{N}_m} r_{i,j=l^-}^{t,m} = v_{j=l^-} \cdot {\mu^{^{t+1,m}}_{j=l^-}}^{*} \quad \forall l \in \mathcal{N}_m, t \in \mathcal{T}/|\mathcal{T}|\label{eq:stage2_demand_1} \\
& \sum_{i \in \mathcal{N}_m} r_{i,j}^{|\mathcal{T}|,m} = v_{j=l^-} \cdot {\mu^{1,m}_{j=l^-}}^{*} \quad \forall j \in \mathcal{N}_m \label{eq:stage2_demand_2}\\
& \sum_{m \in \mathcal{M}} \sum_{\tau \in \mathcal{T}} \sum_{l \in \mathcal{A}_m^a} \pi_{t, \tau} v_l {z_l^{\tau}}^* \le \sum_{e \in E} \sum_{i \in \mathcal{N}_e} h_{i}u_{i,e}^{t} \quad \forall t \in \mathcal{T} \label{eq:stage2_recharge_demand}
\end{align}
\end{subequations}

Proposition~\ref{prop_4} shows that, under the appropriate conditions, the optimality gap of the results relative to solving \textbf{Eq.~\ref{eq:lower_service}} directly can be bounded.

\begin{proposition}\label{prop_4}
The sequential decomposition method yields solutions that are $\epsilon$-optimal with respect to the integrated model in \textbf{Eq.~\ref{eq:lower_service}} if the marginal utility of the service network assignment strictly exceeds that of the associated fleet recharge and redistribution assignment.
\end{proposition}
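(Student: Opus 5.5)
The plan is to compare the lexicographic (sequential) solution with the integrated optimum $\Phi_1^* = \min(\Phi_S + w_o\Phi_R)$ through a sandwich argument. Let $(\mathbf{x}^I,\mathbf{z}^I,\boldsymbol{\mu}^I,\mathbf{r}^I,\tilde{\mathbf{x}}^I,\mathbf{u}^I)$ denote an integrated optimum, and let $(\mathbf{x}^D,\mathbf{z}^D,\boldsymbol{\mu}^D)$ be the Stage 1 solution. The Stage 2 solution is then $(\mathbf{r}^D,\tilde{\mathbf{x}}^D,\mathbf{u}^D)$, with $\Phi^D = \Phi_S^D + w_o\Phi_R^D$.

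First I check feasibility. Stage 1 tightens Eq.~\ref{eq:fleet_cap} to Eq.~\ref{eq:stage1_fleet} and adds Eq.~\ref{eq:recharge_cap}. Given these, Stage 2 always has a feasible redistribution, since the totals of supply and demand in Eqs.~\ref{eq:stage2_supply}--\ref{eq:stage2_demand_2} both equal $V_m$. I would take the complete bipartite $\mathbf{r}$ as the witness. The decomposed solution is therefore feasible for Eq.~\ref{eq:lower_service}, which gives $\Phi^D \ge \Phi_1^*$.

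For the upper bound I split the gap into two terms:
\[
\Phi^D-\Phi_1^* = \bigl(\Phi_S^D-\Phi_S^I\bigr) + w_o\bigl(\Phi_R^D-\Phi_R^I\bigr).
\]
The first term is $\le 0$ provided $(\mathbf{x}^I,\mathbf{z}^I,\boldsymbol{\mu}^I)$ is feasible for Stage 1. That requires showing that the integrated optimum satisfies Eq.~\ref{eq:stage1_fleet}, which follows from Eqs.~\ref{eq:rebal_out}--\ref{eq:rebal_in_2}: conservation makes $\sum v_l\mu^t$ constant in $t$. This constant may be smaller than $V_m$, though. I would handle this by the $\epsilon$-slack in Eq.~\ref{eq:node_deploy_act}, or by absorbing it into the bound.

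It therefore remains to bound $w_o(\Phi_R^D-\Phi_R^I)$. Here I use the hypothesis. Proposition~\ref{prop_1} and Proposition~\ref{prop_2} show that each subproblem is a PURC assignment governed by its own KKT marginal disutilities, and Proposition~\ref{prop_3} shows that the second stage chooses $\mathbf{r}$ optimally given $\bar{\boldsymbol{\mu}},\bar{\mathbf{z}}$. So $\Phi_R^D = \Phi_R^\star(\boldsymbol{\mu}^D,\mathbf{z}^D)$ and $\Phi_R^I \ge \Phi_R^\star(\boldsymbol{\mu}^I,\mathbf{z}^I)$, where $\Phi_R^\star(\cdot)$ is the optimal-value function of Stage 2. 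I then use a Lipschitz-type bound: the sensitivity of $\Phi_R^\star$ to $(\boldsymbol{\mu},\mathbf{z})$ is given by the Stage 2 multipliers on Eqs.~\ref{eq:stage2_supply}--\ref{eq:stage2_recharge_demand}, say bounded by $L_R$. The sensitivity of $\Phi_S$ is given by the multipliers $\boldsymbol{\eta},\boldsymbol{\sigma},\boldsymbol{\chi}$ from Proposition~\ref{prop_1}, bounded below by $L_S$.

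The assumption that the service marginal utility strictly exceeds the recharge one means $L_S > w_oL_R$. Under it, any deviation $\Delta$ of the service variables from $(\boldsymbol{\mu}^D,\mathbf{z}^D)$ raises $\Phi_S$ by at least about $L_S\|\Delta\|$ minus strong-convexity terms from the quadratic $\mathcal{F}$, while it can lower $w_o\Phi_R$ by at most $w_oL_R\|\Delta\|$. The integrated optimizer therefore has no incentive to move the service decisions away from the Stage 1 solution, except within a neighbourhood whose size is controlled by the dual gap. Combining the two terms gives $0\le \Phi^D-\Phi_1^*\le\epsilon$, with $\epsilon$ depending on $w_oL_R/L_S$, on the fleet slack $\epsilon$, and on the curvature of $\mathcal{F}$.

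The main obstacle is this last step. The integrated problem is a nonconvex QPQC because of the bilinear term $x\,q$ in Eq.~\ref{eq:recharge_capacity}, so global Lipschitz continuity of $\Phi_R^\star$ and bounded multipliers are not automatic. I would first try to avoid the bilinearity by noting that with $\mathbf{r}$ fixed the recharge problem is convex, as in Proposition~\ref{prop_2}. I would then bound the multipliers using a Slater point built from the complete-bipartite redistribution. Failing that, I would state the result locally, around KKT points, and make the hypothesis precise as a multiplier-dominance condition.
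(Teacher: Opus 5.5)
\paragraph{Main gap: the linear-growth step fails.} Your upper bound rests on the claim that any deviation $\Delta$ of the service variables from $(\boldsymbol{\mu}^D,\mathbf{z}^D)$ raises $\Phi_S$ by at least $L_S\|\Delta\|$, with $L_S$ a lower bound on the multipliers $\boldsymbol{\eta},\boldsymbol{\sigma},\boldsymbol{\chi}$. This is false, for two reasons.
\begin{itemize}
\item $\boldsymbol{\mu}$ does not appear in $\Phi_S$ at all: \textbf{Eq.~\ref{eq:service_object}} contains only $\mathbf{x}$ and $\mathbf{z}$. So $\Phi_S$ is exactly flat in every $\boldsymbol{\mu}$-direction. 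The Stage-1 argmin in $\boldsymbol{\mu}$ is typically a whole face of $\{\mathbf{z}\le\boldsymbol{\mu}\le 1+\epsilon,\ \sum_l v_l\mu^t_{l^-}=V_m\}$, and Stage 1 returns an arbitrary point of it. The integrated problem, by contrast, chooses $\boldsymbol{\mu}$ to lower $\Phi_R$, so it does move away from $\boldsymbol{\mu}^D$.
\item Multipliers measure how the optimal value responds to right-hand-side perturbations. They do not give the growth rate of the objective away from a minimizer. With the smooth $x^2$ perturbation, $\Phi_S$ generally has zero directional derivative along the optimal face. Its growth in the $\mathbf{x}$ (and hence $\mathbf{z}$) directions there is at best quadratic, not linear.
\end{itemize}
Hence $L_S=0$ in the relevant directions, $L_S>w_oL_R$ cannot hold, and ``the integrated optimizer has no incentive to move the service decisions'' does not follow. (The paper's literal condition on $\partial\Phi_S/\partial\boldsymbol{\mu}$ has the same defect.)

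The $L_R$ side is also unproven. Global Lipschitz continuity of the Stage-2 value function of a nonconvex QPQC is not established, as you acknowledge. Moreover, AM returns only a local Stage-2 solution, so $\Phi_R^D=\Phi_R^\star(\boldsymbol{\mu}^D,\mathbf{z}^D)$ is not available either.

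\paragraph{Neither constant is needed.} Your first term is fine, and for exactly the reason above. Since $\boldsymbol{\mu}$ enters Stage 1 only through $\mathbf{z}\le\boldsymbol{\mu}\le 1+\epsilon$ and the fleet equality, you can raise $\boldsymbol{\mu}^I$ to satisfy \textbf{Eq.~\ref{eq:stage1_fleet}} at zero cost. This works whenever $(1+\epsilon)\sum_l v_l\ge V_m$, which Stage-1 feasibility requires anyway. The integrated optimum must also satisfy \textbf{Eq.~\ref{eq:recharge_cap}}, which holds if $u\le 1$. This gives $\Phi_S^D\le\Phi_S^I$. Since $\Phi_R\ge 0$ under the paper's specification (nonnegative linear costs, $x^2$, $u\ge0$), you get directly
\[
\Phi^D-\Phi_1^*\le w_o\bigl(\Phi_R^D-\Phi_R^I\bigr)\le w_o\Phi_R^D,
\qquad
\frac{\Phi^D-\Phi_1^*}{\Phi_1^*}\le\frac{w_o\Phi_R^D}{\Phi_S^D}.
\]
This is the honest quantitative form of ``service dominance.'' It matches the paper's remark that the condition is enforced by a small $w_o$.

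\paragraph{Comparison with the paper, and feasibility.} The paper's proof gives no bound at all. It states a gradient-dominance condition and asserts that the recharge surface is then flat, so the combined optimum tracks that of $\Phi_S$. The bound above is the rigorous content of that heuristic.

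Finally, your lower bound $\Phi^D\ge\Phi_1^*$ needs the decomposed point to be feasible for \textbf{Eq.~\ref{eq:lower_service}}. Stage 2, however, relaxes the equality \textbf{Eq.~\ref{eq:recharge_demand}} to the inequality \textbf{Eq.~\ref{eq:stage2_recharge_demand}}. Your complete-bipartite witness only certifies \textbf{Eqs.~\ref{eq:stage2_supply}--\ref{eq:stage2_demand_2}}. You must either impose the equality in Stage 2 or show that it binds at the Stage-2 solution.
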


\begin{proof}The Service Dominance Condition requires that the marginal utility (or gradient magnitude) of the service network assignment strictly exceeds that of the associated fleet recharge and redistribution assignment. Since $\boldsymbol{\mu}^*$ and $\boldsymbol{z}^*$ are the only coupling variables shared across the two stages, the condition is formally stated as:
\begin{equation}
\left\| \frac{\partial \Phi_S}{\partial \boldsymbol{\mu}} \right\| \gg \left\| \frac{\partial \Phi_R}{\partial \boldsymbol{\mu}} \right\|,\ 
\left\| \frac{\partial \Phi_S}{\partial \boldsymbol{z}} \right\| \gg \left\| \frac{\partial \Phi_R}{\partial \boldsymbol{z}} \right\|
\end{equation}
Under this condition, the optimal deployment strategy is driven primarily by the optimization of the service network subproblem, rendering the redistribution and recharge disutility surface flat relative to the service disutility surface. Consequently, the optimal trajectory of the combined objective function $\Phi_{1} = \Phi_S + \Phi_R$ aligns asymptotically with the trajectory of $\Phi_S$, justifying the lexicographic solving approach. \end{proof}

The condition shown in Proposition~\ref{prop_4} typically holds true for the majority of real-world applications. We can also apply such condition when solving the integrated lower level model by assigning sufficiently small values to the weight ($w_{o}$) in \textbf{Eq.~\ref{eq:recharge_object}}.

The stage 2 model is still a QPQC. To further accelerate the solution speed, we introduce the Alternating Minimization (AM) heuristic. The AM heuristics systematically avoids the computational bottleneck by decoupling the optimization into two alternating steps. First, by holding $\tilde{\mathbf{x}}$ constant, the charging node capacity constraint \textbf{Eq.~\ref{eq:recharge_capacity}} is strictly linearized, reducing the optimization of the redistribution flows $\mathbf{r}$ to an efficient Linear Program (LP). In this step, the flow conservation constraint \textbf{Eq.~\ref{eq:recharge_flow}} is not considered. Subsequently, by fixing the newly updated $\mathbf{r}$, the model reduces to a standard Quadratic Program (QP) to solve for the flow assignment problem. By iteratively alternating between these two strictly convex subproblems until convergence, the proposed AM heuristics ensures tractability and rapid execution for large-scale problems. The full heuristic is presented in \textbf{Algorithm \ref{alg:stage2_am}}.

\begin{algorithm}
\caption{Alternating Minimization Heuristic for Stage 2}
\label{alg:stage2_am}
\begin{algorithmic}[1]
\Require Stage 1 optimal parameters $\boldsymbol{\mu}^*$, $\mathbf{z}^*$, Tolerance $\epsilon_{bcd}$, Max Iterations $K_{AM}$.
\State \textbf{Initialization:} 
\State \quad Initialize routing fractions $\tilde{\mathbf{x}}^{(0)}$ uniformly (e.g., $\tilde{x}_{\tilde{s}\tilde{l}}^{(0)} = 1 / |\tilde{\mathcal{A}}_{\tilde{s}}|$).
\State \quad Set iteration counter $k \gets 0$, and $\Phi_R^{(0)} \gets \infty$.

\While{$k < K_{AM}$}
    \State \textbf{Step A: Redistribution Flow Optimization ($\mathbf{r}$-Step)}
    \State \quad Fix $\tilde{\mathbf{x}} \gets \tilde{\mathbf{x}}^{(k)}$.
    \State \quad Solve for $\mathbf{r}^{(k+1)}$ and $\mathbf{u}^{(k+1)}$ by minimizing $\Phi_R$ subject to:
    \State \quad \quad Fleet inventory constraints \eqref{eq:stage2_supply} - \eqref{eq:stage2_demand_2}.
    \State \quad \quad Linearized charging constraint: $\sum (\tilde{x}^{(k)} \cdot r) = u \cdot v_i$.
    
    \State \textbf{Step B: Recharge Routing Optimization ($\tilde{\mathbf{x}}$-Step)}
    \State \quad Fix $\mathbf{r} \gets \mathbf{r}^{(k+1)}$.
    \State \quad Solve for $\tilde{\mathbf{x}}^{(k+1)}$ and $\mathbf{u}^{(k+1)}$ by minimizing $\Phi_R$ subject to:
    \State \quad \quad Flow conservation \eqref{eq:recharge_flow}.
    \State \quad \quad Linearized charging constraint: $\sum (\tilde{x} \cdot r^{(k+1)}) = u \cdot h$.

    \State \textbf{Convergence Check:}
    \State \quad Evaluate current objective $\Phi_R^{(k+1)}$.
    \If{$| \Phi_R^{(k)} - \Phi_R^{(k+1)} | \le \epsilon_{bcd}$}
        \State \textbf{break}
    \EndIf
    \State \quad $k \gets k + 1$
\EndWhile
\State \Return Local optimal variables $(\mathbf{r}^*, \tilde{\mathbf{x}}^*, \mathbf{u}^*) \gets (\mathbf{r}^{(k)}, \tilde{\mathbf{x}}^{(k)}, \mathbf{u}^{(k)})$.
\end{algorithmic}
\end{algorithm}

\subsection{Bilevel model solution method}

The bilevel problem \textbf{Eq.~\ref{eq:upper_level}} is a nonconvex function because of the complex interaction between the upper and lower level problems, along with the nonconvexity of the lower-level problem itself. To address this, we employ a Projected Gradient Ascent (PGA) algorithm coupled with a Finite Difference approximation scheme. This approach treats the lower-level problems as a "black box" that returns the system state ($\mathbf{x}, \mathbf{r}, \mathbf{z}$) for any given price vector.

The solution procedure iterates through three key phases: gradient estimation, step size determination, and projection. Since analytical gradients are unavailable, we estimate the sensitivity of the objective function $\nabla \Phi_0$ numerically. To handle the strict budget constraint (\textbf{Eq.~\ref{eq:upper_op_prof}}), we incorporate it into the upper-level objective using a penalty method. Let $\gamma$ denote the penalty coefficient and $\mathcal{B}$ represent the magnitude of the budget violation (where $\mathcal{B} = 0$ if the constraint is satisfied). We define the augmented objective function as $\Phi_0^{\prime} = \Phi_0 - \gamma \mathcal{B}$.

For each pricing decision $p_{t}$, we apply a small perturbation $\delta$ and solve the complete lower-level system to observe the resulting change in profit. The partial derivative is approximated as $\frac{\partial \Phi_0^{\prime}}{\partial p_{t}} \approx \frac{\Phi_0^{'}(\mathbf{p} + \delta) - \Phi_0^{'}(\mathbf{p})}{\delta}$. The decomposition algorithm can be applied to solve the lower-level problem efficiently, allowing quick capture of the aggregate response of the user equilibrium and fleet redistribution logic to price signals.

To ensure practical feasibility, the candidate price vector $\tilde{\mathbf{p}}$ is projected onto the feasible set $\Omega$ at each iteration. This projection operator $\Pi_{\Omega}(\cdot)$ enforces box constraints, ensuring prices remain within a pre-defined valid range $[\underline{p}, \bar{p}]$, and ramp constraints, which cap the change in price between iterations to prevent market shocks and ensure algorithmic stability. Furthermore, to guarantee monotonic convergence to a local optimum, we employ a backtracking line search strategy. Instead of a fixed step size, we dynamically adjust the learning rate $w_{uc}$. A candidate step is accepted only if it satisfies the Armijo condition, which ensures that the improvement in profit is sufficiently proportional to the gradient magnitude: $\Phi_0^{'}(\mathbf{p}^{(k)} + w_{uc} \mathbf{d}_k) \ge \Phi_0^{'}(\mathbf{p}^{(k)}) + c_1 w_{uc} \nabla \Phi_0^{'}(\mathbf{p}^{(k)})^T \mathbf{d}_k$. If the condition is not met, the step size $w_{uc}$ is reduced by a decay factor $\beta$ until the condition holds or the step becomes negligible.

To expand the solution strategy to a global optimum, we employ a Multi-Start Initialization Strategy. The optimization algorithm is executed in parallel, each initialized with a distinct pricing vector $\mathbf{p}^{(0)}$. This parallelized exploration of the solution space reduces the likelihood of trapping in local optima. Upon convergence of all instances, we select runs that do not violate budget constraints ($\mathcal{B}=0$). The solution yielding the highest system profit $\Phi_0^*$ is selected as the global output. If none of the converged results meet the budget constraints, the platform is deemed unsustainable. The complete solution procedure is summarized in \textbf{Algorithm \ref{alg:pricing_optimization_decomposition}}.

\renewcommand{\thealgorithm}{2}
\begin{algorithm}
\caption{Multi-Start Projected Gradient Ascent with Lexicographic Decomposition}
\label{alg:pricing_optimization_decomposition}
\begin{algorithmic}[1]
\Require bounds $[\underline{p},\bar{p}]$, tolerance $\epsilon_{rel}$, max iterations $K$, perturbation $\delta$, penalty $\gamma$, number of starts $N_{\mathrm{starts}}$

\Function{PenalizedProfit}{$\mathbf{p}$}
    \State Solve Service Subproblem $\to (\mathbf{x}^*,\mathbf{z}^*,\boldsymbol{\mu}^*)$
    \State Solve Recharge Subproblem given $(\mathbf{z}^*,\boldsymbol{\mu}^*)$
    \Statex \hspace{\algorithmicindent} $\to (\mathbf{r}^*,\tilde{\mathbf{x}}^*,\mathbf{u}^*)$ using Algorithm~\ref{alg:stage2_am}
    \State Evaluate profit $\Phi_0(\mathbf{p})$ and budget violation $\mathcal{B}(\mathbf{p})$
    \State \Return $\Phi_0'(\mathbf{p})=\Phi_0(\mathbf{p})-\gamma\max\{\mathcal{B}(\mathbf{p}),0\}$
\EndFunction

\For{$run=1$ to $N_{\mathrm{starts}}$} \textbf{in parallel}
    \State Initialize $\mathbf{p}^{(0)} \sim \mathrm{Uniform}(\underline{p},\bar{p})$, $k \gets 0$
    \State $\Phi_0'(\mathbf{p}^{(0)}) \gets \Call{PenalizedProfit}{\mathbf{p}^{(0)}}$
    \While{$k<K$ and not converged}
        \ForAll{$t \in \mathcal{T}$, $m \in \mathcal{M}$}
            \State $\mathbf{p}' \gets \mathbf{p}^{(k)}$; $p_t' \gets p_t^{(k)}+\delta$
            \State $\nabla_t \Phi_0' \gets \dfrac{\Phi_0'(\mathbf{p}')-\Phi_0'(\mathbf{p}^{(k)})}{\delta}$,
            where $\Phi_0'(\mathbf{p}')=\Call{PenalizedProfit}{\mathbf{p}'}$
        \EndFor
        \State $w_{uc} \gets 1/\max |\nabla \Phi_0'|$
        \Repeat
            \State $\tilde{\mathbf{p}} \gets \Pi_{\Omega}\!\left[\mathbf{p}^{(k)}+w_{uc}\nabla \Phi_0'\right]$
            \State Evaluate $\Phi_0'(\tilde{\mathbf{p}})=\Call{PenalizedProfit}{\tilde{\mathbf{p}}}$
            \If{Armijo condition fails}
                \State $w_{uc} \gets \beta w_{uc}$
            \EndIf
        \Until{Armijo condition holds or $w_{uc}$ is too small}
        \State $\mathbf{p}^{(k+1)} \gets \tilde{\mathbf{p}}$, $k \gets k+1$
        \State Check relative convergence using $\epsilon_{rel}$
    \EndWhile
    \State Record $\mathbf{p}_{run}^* \gets \mathbf{p}^{(k)}$, $\Phi_{run}^* \gets \Phi_0(\mathbf{p}^{(k)})$, and $\mathcal{B}(\mathbf{p}^{(k)})$
\EndFor
\State Select $n=\arg\max_{run}\{\Phi_{run}^*:\mathcal{B}(\mathbf{p}_{run}^*)=0\}$
\State \Return $\mathbf{p}_n^*$ and $\Phi_n^*$
\end{algorithmic}
\end{algorithm}

\section{Numerical experiments}

We begin by validating the proposed heuristic for solving the lower-level model. Using a synthetic network and simulated demand patterns, we benchmark the exact solutions against those generated by the decomposition and AM heuristic to demonstrate the method's computational efficiency and accuracy. Afterwards, we conduct a series of numerical experiments to validate the proposed bilevel eMaaS assignment framework and evaluate its analytical properties. The current experimental setup assumes a monopolistic market structure featuring a single MOD service provider and one charging provider operating within the platform. Extending the analysis to capture the complex competitive dynamics among multiple MSPs is reserved for future research. 

All computational instances are executed on a workstation equipped with an Intel Core i7-13705H processor and 32 GB of RAM. For brevity in the subsequent discussions, the "recharge and redistribution network subproblem" is referred to simply as the "recharge network subproblem."

\subsection{Evaluation of the proposed heuristic}

To validate the proposed heuristic for the lower level problem, we construct a synthetic service network with 65 nodes and 540 links and solve only that lower level minimization problem. The test instance includes two time intervals with simulated demand patterns. In addition, the recharge network contains 570 inter-layer links connecting the service layers across the two time intervals. The data for the test case is available in project's GitHub repository \citep{yang2026three_sided}.

All solution approaches are evaluated under the same parameter settings. In particular, the weight parameter of recharge objective is set to $w_o = 0.01$, which satisfies the condition stated in \textbf{Proposition 4}. We first use Gurobi to solve the integrated lower-level model directly within a 10-minute time limit, then tackle the same problem using the proposed decomposition and AM heuristics.

As shown in \textbf{Fig.~\ref{fig:heuristic_obj}}, the AM heuristic quickly converges and reaches the stopping tolerance of $10^{-4}$ within 8 iterations. Most of the improvement occurs during the first three iterations, after which the stage-2 objective value stabilizes, indicating good numerical convergence behavior. The final results are summarized in \textbf{Table~\ref{tab:solution_comparison}}. The decomposition and AM heuristics achieve an objective value of 455.80, which is lower than the benchmark solution value of 455.91 achieved within the 10-minute timeframe. Meanwhile, the proposed approach reduces the computation time from 600 seconds to 13.2 seconds, a reduction rate of 2.2\% of the runtime. These results demonstrate that the proposed heuristic can produce near-exact solutions with substantially improved computational efficiency.

\begin{figure}[htbp] 
\centering 
\includegraphics[width=0.6\textwidth]{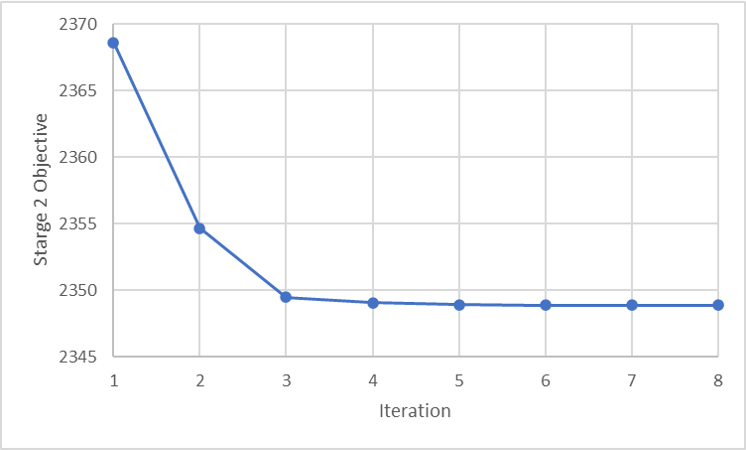} \caption{Convergence of the stage-2 objective value under the AM heuristics} \label{fig:heuristic_obj} 
\end{figure} 

\begin{table}[htbp] 
\centering 
\caption{Comparison between the exact solution and the proposed decomposition and AM heuristics.} \label{tab:solution_comparison} \small 
\begin{tabularx}{\linewidth}{l>{\centering\arraybackslash}Xcc} \hline & Objective ($\Phi_S + 0.01\Phi_R$) & Optimality gap & Run time (second) \\ 
\hline Exact solution via Gurobi & 455.91 & 0.13\% & 600 \\ 
Decomposition \& AM heuristic & 455.8 & - & 13.2 \\ \hline \end{tabularx} 
\end{table}

\subsection{Case study to illustrative analytical insights from model}

We implement the model using the Nguyen-Dupuis network. The analysis begins by isolating the lower-level model to assess system performance under fixed service prices and a predefined total fleet size. We first establish a baseline scenario to explore flow assignment and capacity allocation across the service and recharge networks. We then conduct two sensitivity analyzes: the first evaluates the impact of power grid constraints by varying charging infrastructure limits, and the second examines how the route dispersion function influences network equilibrium and capacity allocations.

We then evaluate the complete bilevel formulation to explore the strategic interplay between long-term total fleet sizing, dynamic service pricing, and operational routing decisions. Because of the small size of the instance, we can solve the integrated lower-level problem directly with Gurobi in this section to guarantee an exact solution for the lower level. The results illustrate insights that can be gained about the trade-offs made in operating such a mobility-energy platform.

\subsubsection{Nguyen-Dupuis Network}

The original Nguyen-Dupuis network comprises 13 nodes and 19 directional links. We expand it to make the 19 links bidirectional and use it as the OOP subnetwork. To represent the added MOD service, we add another MOD subnetwork based on nodes 2, 5, 6, 7, 8, 9, 10, and 11, as shown in the shaded area of \textbf{Fig.~\ref{fig:np_network_topology}}. We label the MOD service nodes as 102, 105, 106, 107, 108, 109, 110, and 111 to distinguish them from the OOP network. All MOD nodes and links are operated by a single operator. Therefore, only one MOD subnetwork is created. The MOD service subnetwork has a higher link service fee compared to the OOP network, while providing significantly shorter link travel times. Specifically, the MOD links operate at a speed of 25 miles/h compared to 15 miles/h on the base links, reflecting the premium nature of the service. As a result, the mobility service network contains 21 nodes and 74 links.

\begin{figure}[htbp]
    \centering
    \includegraphics[width=0.6\textwidth]{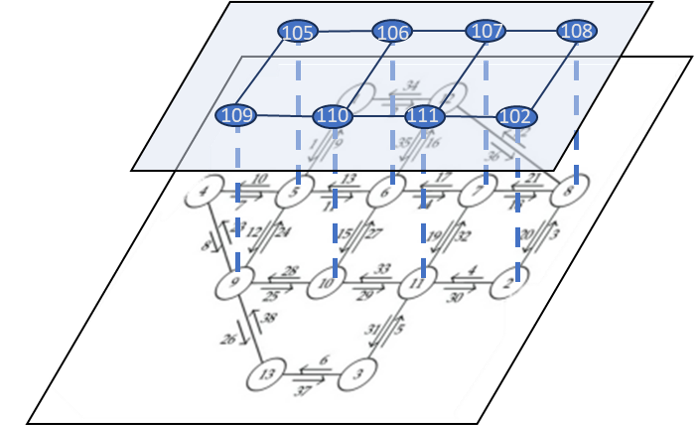}
    \caption{Expanded Nguyen-Dupuis service network}
    \label{fig:np_network_topology}
\end{figure}

Two charging stations are placed at Node 7 and Node 9, both operated by one provider. The recharge network is constructed by directly linking each node pair representing a redistribution movement between two subsequent time intervals. By adding direct links connecting each node from interval $t$ to the charging node, and links between the charging node to each node in subsequent interval $t+1$, we create a recharge network containing 96 links and 18 nodes. \textbf{Fig.~\ref{fig:recharge_topology}} illustrates the links out from node 109 at interval $t$ to all service nodes in interval $t+1$ as an example. We label the charge nodes as 207 and 209 respectively.

\begin{figure}[htbp]
    \centering
    \includegraphics[width=0.6\textwidth]{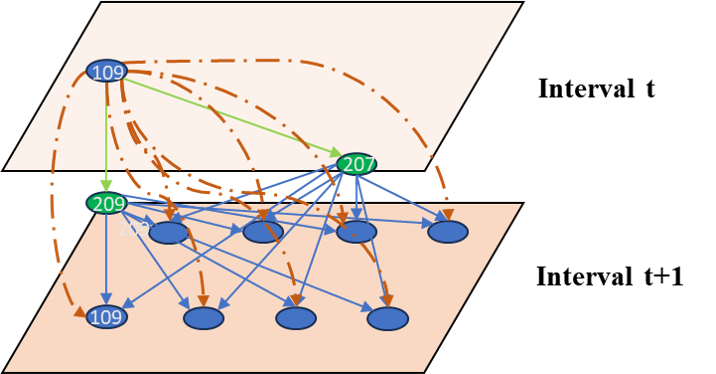}
    \caption{Recharge network illustration using node 109 as an example}
    \label{fig:recharge_topology}
\end{figure}

\subsubsection{Baseline scenario}

We establish the baseline scenario that serves as the foundation for the subsequent numerical experiments. All input parameters defined here are held constant throughout the sensitivity analyses, unless explicitly treated as the variable of interest in later sections. All detailed input settings, including the complete passenger demand matrices and system parameters, are provided in \textbf{Appendix A}.

We involve three intervals ($t=1, 2, 3$) in a cyclic manner for our numerical experiments. Nodes 1, 2, 3, and 4 serve as both origin and destination centroids with OD demand exhibiting distinct temporal distributions. As detailed in \textbf{Table \ref{tab:demand_data}}, demand peaks shift significantly across intervals; for instance, traffic from Origin 1 to Destination 3 drops from 1200 in $t=1$ to 360 in $t=2$, then stabilizes at 240 in $t=3$, requiring the system to dynamically adjust fleet distribution.

The input parameters for the integrated mobility-energy platform model are summarized in \textbf{Table \ref{tab:np_input}}. The traveler disutility function $\mathcal{U}_{l}$ in the mobility service network subproblem is modeled as a generalized cost combining travel time and link price:
\begin{equation}
    \mathcal{U}_{l} = \beta_{VOT} \cdot t_{l} + p_{l}
\end{equation}
where $t_{l}$ denotes the link travel time and $\beta_{VOT}$ represents the traveler's Value of Time, assumed to be \$20/h. The term $p_{l}$ refers to the link-specific pricing. For the MOD subnetwork, pricing is distance based at a rate of \$0.5 per unit length. In contrast, the OOP network uses a fixed pricing structure, charging a "flat fee" of \$1 per link traversal regardless of its length. We use this "flat fee" to reflect the combined fixed and variable costs associated with modes other than the MOD service. 

The operator's disutility $\tilde{\mathcal{U}}_{\tilde{l}}$ in the recharge network is defined by operational distances and facility access fees:
\begin{equation}
    \tilde{\mathcal{U}}_{\tilde{l}} = c_{\tilde{l}} d_{\tilde{l}} + \delta_{\tilde{l} \in \tilde{A^a_i}} p_{station}
\end{equation}
where $c_{\tilde{l}}$ is the operational cost per unit length, $d_{\tilde{l}}$ is the length of the link length, $p_{station}$ is the fixed charging price, and $\delta_{\tilde{l} \in \tilde{A^a_i}}$ is a binary indicator that equals 1 if the link involves accessing a charging station.

Regarding the coalition weights, we set the traveler utility weight $w_{uc}$ to 1 and the operator utility weight $w_{oc}$ to $5 \times 10^{-4}$. Since traveler disutility is captured by unit value while operator disutility is captured by aggregated cost, the operator weight must be sufficiently small to balance the two objectives. The value of $5 \times 10^{-4}$ effectively normalizes the operator's influence to approximately 0.225 of the average weight relative to the traveler, prioritizing traveler choice in the equilibrium.

In this initial baseline scenario, only one operator provides the MOD service, and the total fleet size is 1600. Both service nodes and charging stations are subject to a capacity limit of 300 units per interval. The operational cost is set to \$0.2 per unit length for MOD links and \$0.1 for recharge network links, while other links having zero cost. Charging prices are location-dependent to test spatial sensitivity, with the station at node 7 charging \$4 per usage and the station at node 9 charging \$2 per usage. Finally, the temporal propagation of charging load is governed by the empirical multiperiod function with factors $[0.05, 0.1, 0.2]$, meaning that 5\%, 10\%, and 20\% of the service demand from previous intervals ($t, t-1, t-2$) propagates into the current charging load, simulating a lag effect between mobility demand and charging demand. Let $\tilde{\mathcal{D}}_t$ and $\mathcal{D}_t$ denote the charging and service demands, respectively. \textbf{Eq.~\ref{eq:charging_propagation}} illustrates the calculation. We solve the model using these input parameters and use the results as the benchmark for further sensitivity analysis.

\begin{equation}\label{eq:charging_propagation}
\begin{bmatrix}
\tilde{\mathcal{D}}_1 \\
\tilde{\mathcal{D}}_2\\
\tilde{\mathcal{D}}_3
\end{bmatrix}
=
\begin{bmatrix}
0.05 & 0.20 & 0.10 \\
0.10 & 0.05 & 0.20 \\
0.20 & 0.10 & 0.05
\end{bmatrix}
\begin{bmatrix}
\mathcal{D}_1\\
\mathcal{D}_2\\
\mathcal{D}_3
\end{bmatrix}.
\end{equation}


We solve the base case model using both the exact approach and the proposed heuristics for further validation. We achieve global optimality using Gurobi in just 4 seconds, demonstrating robust computational efficiency for moderate scale instances. By comparison, implementing the proposed decomposition and AM heuristic yields a solution with an objective value within a 0.01\% margin of the global optimum in 0.3 seconds. This further demonstrates the effectiveness of the proposed heuristic. For all subsequent analyses, the moderate scale of the test instances is small enough to solve the integrated lower-level model directly using Gurobi.

The eMaaS system reaches an optimal operational state with a total objective value of 1,463.20. The maximum active fleet size is 1,473 vehicles. In other words, during the system's peak demand window, 1,473 out of the 1,600 total available vehicles are actively deployed to satisfy passenger mobility, leaving the remainder strategically buffered for concurrent charging or simply idling due to demand-supply mismatch.

The mobility service network exhibits significant variability in access link capacity. \textbf{Fig.~\ref{fig:service_util_heatmap}} illustrates the utilization rates of access links $z_l$, which are labeled by the linked service nodes. These values reflect the active fleet size for service, not counting the vehicles that are idling, cruising, or recharging. The access link connected to 105 emerges as a critical bottleneck, operating at full saturation (100\%) during periods 1 and 2, and remaining highly utilized (86\%) in period 3. Similarly, access links connected to node 106 and node 111 experience high demands, reaching 100\% utilization in periods 1 and 3, respectively. The access link connected to node 102 shows a gradual load increase, starting at moderate levels and reaching full capacity by the final period. In contrast, Node 110 is heavily underused, suggesting a potential service redundancy at this location.

\begin{figure}[htbp]
    \centering
    \includegraphics[width=0.6\textwidth]{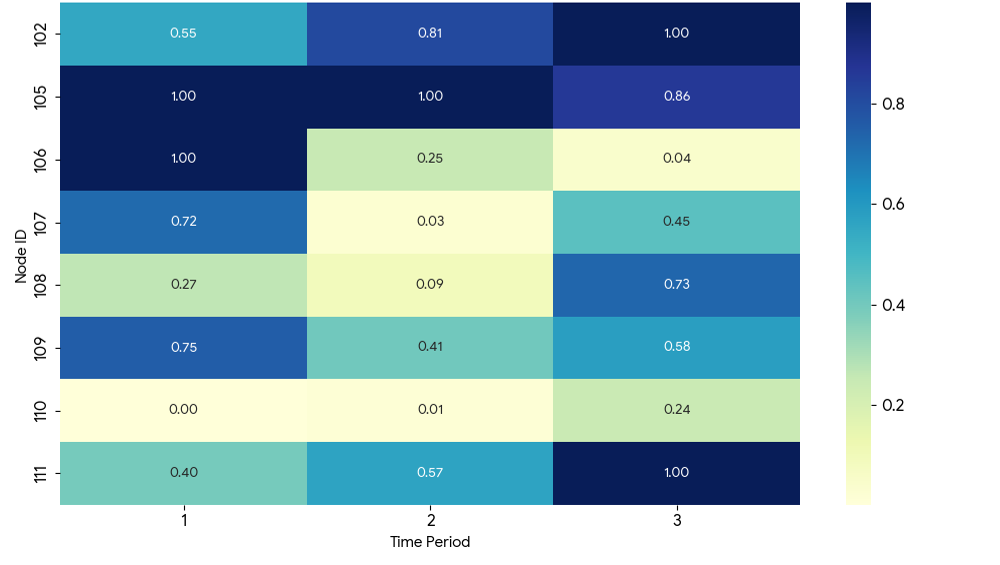}
    \caption{Heatmap of baseline case mobility service access link utilization (labeled by connected service node ID) across time periods.}
    \label{fig:service_util_heatmap}
\end{figure}

The charging infrastructure maintains high utilization without reaching saturation points. As shown in \textbf{Fig.~\ref{fig:charge_util_bar}}, station 209 consistently experiences a higher load than station 207 due to its lower charging price. It peaks at 84\% utilization in period 2, while station 207 reaches a maximum of 77\% in the same interval. Both stations maintain activity levels above 60\% across all periods, indicating effective load balancing and a steady need for fleet recharging. 

\begin{figure}[htbp]
    \centering
    \includegraphics[width=0.6\textwidth]{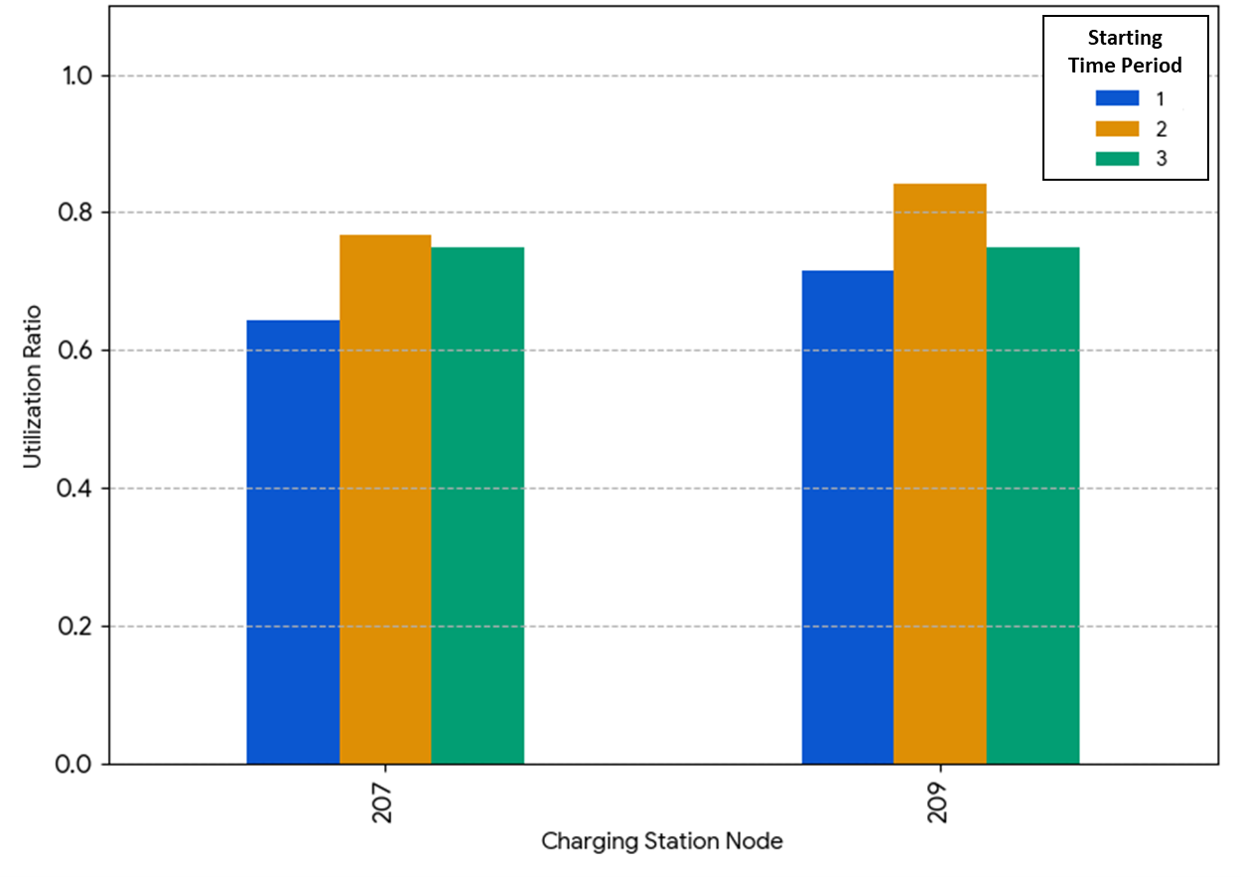}
    \caption{Utilization Levels of Charging Stations 207 and 209 across Charging Intervals.}
    \label{fig:charge_util_bar}
\end{figure}

\subsubsection{Charging station capacity sensitivity}

We conduct a sensitivity analysis on charging station capacity. Holding all other parameters constant, we vary the capacity of both charging stations from 300 to 100 units in decrements of 10. As a result, we obtain 20 results from the created instances.

As illustrated in \textbf{Fig.~\ref{fig:obj_fleet}}, system performance shows a threshold at a charging capacity of 280 units per station. Above this threshold, the system operates stably with the objective value remaining flat around 1,463 and the the active fleet size stays at 1,474 vehicles.This indicates that 280 units per charging station adequately support the fleet, and further capacity increases provide marginal benefit. Below 280 units, charging constraints restrict the active fleet, which decreases linearly to 600 vehicles as capacity drops to 100. Consequently, the objective value increases due to higher system disutility from unmet MOD demand and inefficient routing in OOP subnetwork.

\begin{figure}[htbp]
    \centering\includegraphics[width=0.8\textwidth]{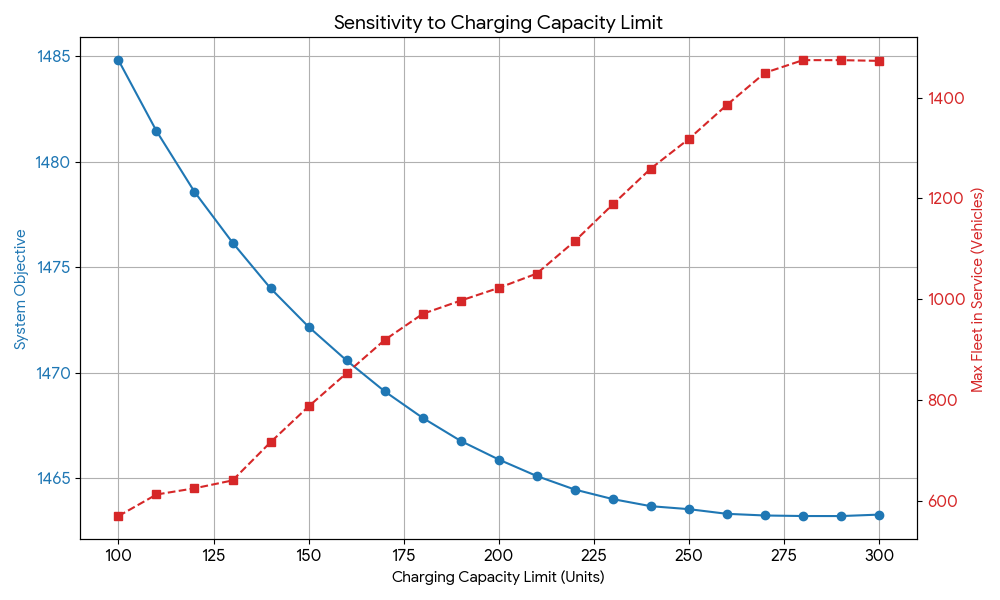}
    \caption{Sensitivity of System Objective and Maximum Fleet in Service to Charging Capacity.} \label{fig:obj_fleet}
\end{figure}

Charging infrastructure constraints directly affect MOD service availability. The heatmaps in \textbf{Fig.~\ref{fig:charging_cap_sen_z_util}} shows access link utilization across different periods. For visual clarity, access links are labeled according to their corresponding service nodes. In period 1 (\textbf{Fig.~\ref{fig:charging_cap_sen_z_util}(a)}), utilization across most access links declines as charging capacity decreases, consistent with the system-wide fleet reduction. However, access links connected to nodes 105 and 106 maintain high utilization, indicating their importance as key service locations. This is expected due to their proximity to origin 1 and 4, which generate significantly higher outbound flow than origins 2 and 3.

\begin{figure}[htbp]
    \centering\includegraphics[width=0.75\textwidth]{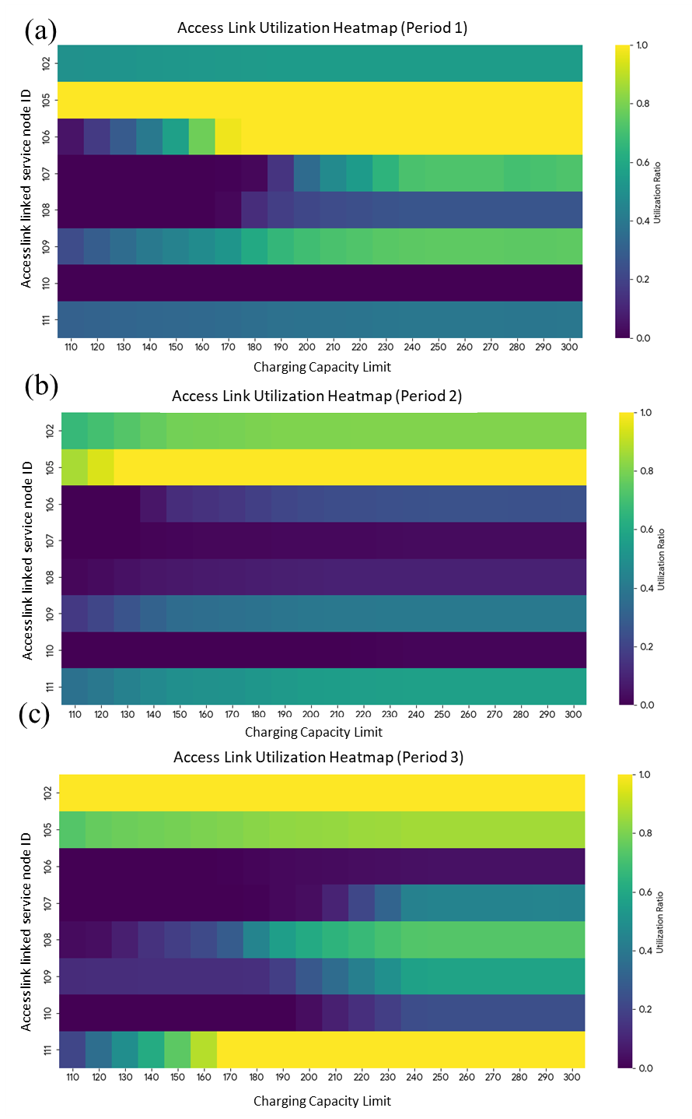}
    \caption{Access link utilization heatmap.} \label{fig:charging_cap_sen_z_util}
\end{figure}

In period 2 (\textbf{Fig.~\ref{fig:charging_cap_sen_z_util}(b)}), overall demand is lower. Access link connected to node 105 remains fully utilized despite capacity reductions, while utilization at other access links drops sharply. This shows that under vehicle shortages, the system prioritizes allocating the limited fleet to the highest-demand locations. In period 3 (\textbf{Fig.~\ref{fig:charging_cap_sen_z_util}(c)}), utilization at access links connected to nodes 106 and 110 drops to near zero as charging capacity declines. The limited fleet consolidates to serve specific nodes, such as 102 and 111. This highlights how capacity constraints force the system to sacrifice broad network coverage to maintain service on high-yield locations.

\subsubsection{Route dispersion effect sensitivity}

We analyze the model's sensitivity to the service dispersion weight, $w_{sd}$. Keeping other parameters constant (including a charging station capacity of 300 units), we vary $w_{sd}$ from 0.5 to 2.0 in 0.1 increments. We obtain a total of 16 instances.

The analysis demonstrates a linear relationship between network dispersion and system efficiency. As shown in \textbf{Fig.~\ref{fig:disp_obj_fleet}}, the objective value increases linearly from 1,301 at $w_{sd} = 0.5$ to 1,758 at $w_{sd} = 2.0$. This occurs because a higher dispersion weight penalizes route concentration, forcing passenger and vehicle flows onto more diverse, less cost-effective routes, thereby increasing total system disutility.

The active fleet size remains relatively stable, ranging from 1,453 to 1,500 vehicles, exhibiting a slight upward trend. Because users choose more dispersed routes rather than aggregating on more efficient ones, a slightly larger fleet is required to serve the network when the value of $w_{sd}$ is higher. Notably, this increase in active fleet size more closely approximates a step-wise pattern rather than a strictly continuous one. Consequently, when calibrating this parameter with empirical data, interval fitting is sufficient and computationally more efficient than precise point estimation.

\begin{figure}[htbp]
    \centering
    \includegraphics[width=0.8\textwidth]{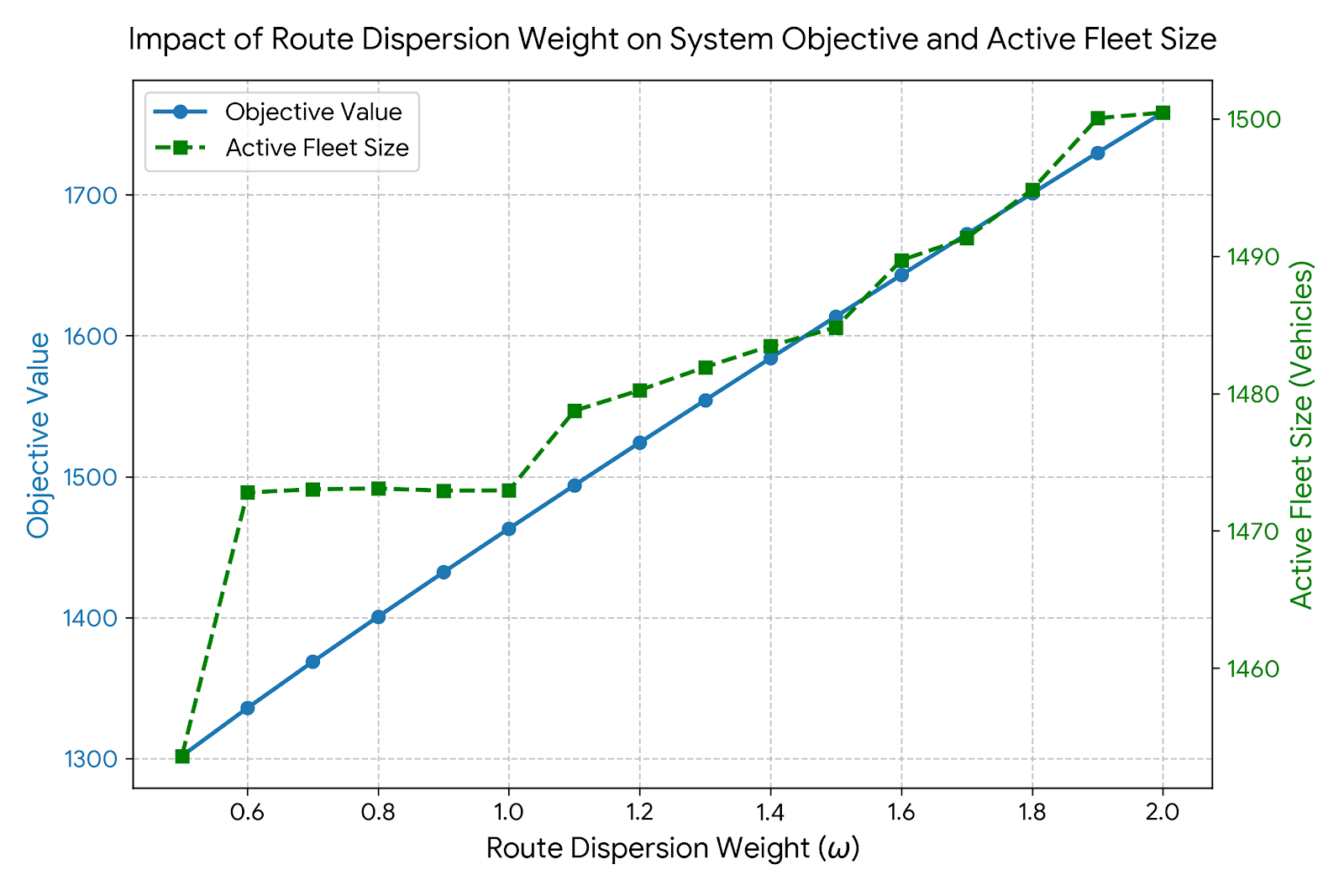}
    \caption{Sensitivity of system objective and fleet size to mobility service dispersion weight.}
    \label{fig:disp_obj_fleet}
\end{figure}

The spatiotemporal utilization of service nodes (\textbf{Fig.~\ref{fig:dispersion_sen_z_util}}) further illustrates how increasing $w_{sd}$ forces a broader spatial distribution of capacity allocation. In period 1 (\textbf{Fig.~\ref{fig:dispersion_sen_z_util}(a)}), high-demand access links like the ones linking node 105 and 106 remain fully saturated (100\%) across all weights. However, utilization of other access links shift significantly: as dispersion increases, utilization at access links connected to node 107 drops from 85\% to 55\%, while traffic is diverted to alternative service locations like node 109 (increasing from 55\% to 87\%) and node 110 (rising from near-zero to active utilization).

\begin{figure}[htbp]
    \centering
    \includegraphics[width=0.8\textwidth]{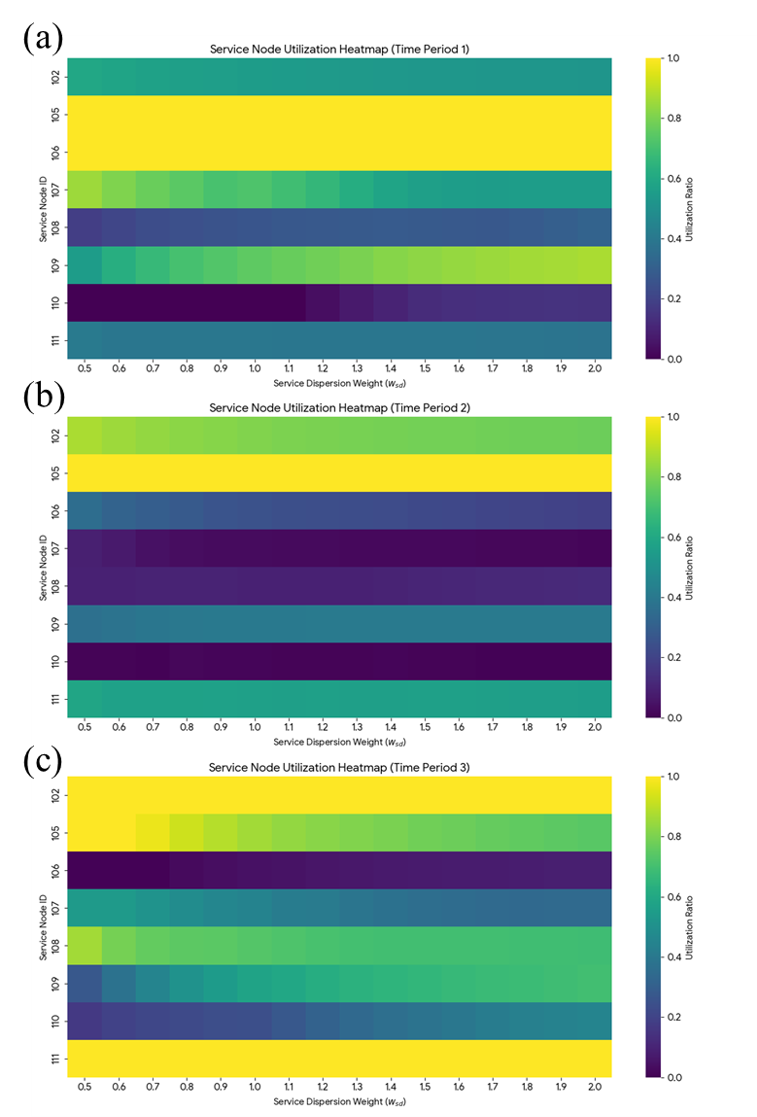}
    \caption{Service node utilization heatmap when (a) t=1, (b) t=2, and (c) t=3.}
    \label{fig:dispersion_sen_z_util}
\end{figure}

In period 2 (\textbf{Fig.~\ref{fig:dispersion_sen_z_util}(b)}), access link connected to node 105 remains highly utilized, while usage at access link connected to node 102 drops slightly from 87\% to 77\%. In period 3 (\textbf{Fig.~\ref{fig:dispersion_sen_z_util}(c)}), the dispersion effect of $w_{sd}$ is most prominent. At low dispersion weights ($w_{sd} \le 0.8$), access link to node 105 is fully utilized (100\%). As $w_{sd}$ approaches 2.0, the system shifts traffic toward underutilized locations. Consequently, the utilization of the access link to node 105 drops to 73\%, while the access link to node 109 increases from 27\% to 69\%, and the value for the access link entering node 110 rises from 16\% to 45\%.

Ultimately, the service dispersion weight serves as a key empirical parameter capturing the diversity of route choices within the network. In practice, $w_{sd}$ must be estimated directly from real-world travel data. For networks with highly heterogeneous routing behaviors, higher values shall be fitted to $w_{sd}$.

\subsubsection{Bilevel optimization: Sensitivity to fleet size constraints with service-wide distance-based pricing}

We focus on a long-term resource strategy by evaluating the system's sensitivity to varying total fleet size and their endogenously linked operational costs. By optimizing both the upper-level pricing decisions and lower level allocation responses, we explore how the platform and operators jointly optimize dynamic service pricing and fleet deployment to maximize profitability. Throughout this analysis, we employ a platform-wide, time-dependent distance pricing strategy, applying a single uniform rate across the entire MOD subnetwork for each of the three time intervals.

The fleet size parameter $V_m$ controls the total number of vehicles available in the network. The scale of this fleet directly impacts and operator cost coefficients. Specifically, an expanded fleet incurs higher unit operating costs due to increased vehicle idling and deadhead cruising. Given a known initial fleet size $V_m$ and operating costs parameters ($c_l$ and $c_i$), we assume a relationship between them for scaling cost parameters to different fleet size scenarios for input into the model. The relationships are formulated in \textbf{Eq.~\ref{eq:fleet_cost}} and \textbf{Eq.~\ref{eq:node_cost}}:

\begin{equation}\label{eq:fleet_cost}
    c_l = \underline{c}_{l} \times (V_m/\underline{V})^{\lambda}, \quad \forall m\in M, l\in A_m
\end{equation}

\begin{equation}\label{eq:node_cost}
    c_i = \underline{c}_{i} \times (V_m/\underline{V})^{\lambda}, \quad \forall m\in M, i\in N_m
\end{equation}

These relationships follow a Cobb-Douglas functional form. Here, $\underline{c}_l$ and $\underline{c}_i$ represent the baseline link travel time, link operating cost, and node operating cost, respectively, calibrated at a baseline fleet size of $\underline{V}$. For this sensitivity analysis, $\underline{t}_l$ corresponds to the baseline link times defined in the previous section. We define the baseline cost parameters as $\underline{c}_l = 0.2$ and $\underline{c}_i = 1$ at a reference fleet size of $\underline{V} = 500$. The parameter $\lambda$ represents the elasticity of the cost function with respect to fleet size. We follow the empirical findings for modern e-hailing systems in \cite{zhang2019efficiency}, noting that microtransit would tend to be less elastic than ridehail, and set $\lambda = 0.2$ accordingly.

We solve the bi-level model by varying the total fleet size from 500 to 1,600 vehicles in increments of 50, keeping all other parameters constant except MOD service pricing. This illustrates how fleet scaling influences optimal service pricing and resource allocation. For each scenario, we update the relevant cost coefficients and optimize the model using the framework of \textbf{Algorithm \ref{alg:pricing_optimization_decomposition}}. Given the moderate scale of this case study, we bypass the proposed lower-level heuristic and directly solve the integrated lower-level model using Gurobi while preserving the remaining algorithmic steps. Executed with 15 multi-start points and a maximum of 15 iterations, each scenario requires approximately 25 minutes of runtime.

To reflect the platform's economics, we introduce a fleet deployment cost, $c_m \times V_m$, to the upper-level objective function. The unit deployment cost, $c_m$, incorporates economies of scale by scaling with fleet size, defined by a baseline cost of $\underline{c}_m = 1$ (at a reference fleet $\underline{V} = 500$) and an elasticity parameter of $\lambda = -0.2$. This formulation explicitly captures the trade-off between fleet capital investment and the operational penalties of supply-demand mismatches.

The final results meet the budget constraints requirement. \textbf{Fig.~\ref{fig:profit_analysis}} shows that operational and total profit exhibit concave growth relative to total fleet size. \textbf{Fig.~\ref{fig:fleet_utilization}} illustrates the corresponding supply-demand dynamics by comparing total fleet size with active fleet size. The gap between total and active fleet size widens significantly above 800 vehicles. Below this threshold, over 93\% of the fleet operates during peak periods, indicating minimal fleet idling due to charging. Because passenger demand is fully met at 750 vehicles, adding vehicles beyond this level causes oversupply and sharply reduces fleet utilization.

In supply-constrained scenarios (capacity below 800), each additional vehicle captures unmet demand and increases revenue. Beyond 1,000 vehicles, the marginal benefit of adding vehicles diminishes. Platform profit plateaus between 650 and 750 vehicles, which remains entirely within the supply-constrained scenario. Total profit peaks at approximately \$55,800 before declining steadily as fleet size increases. This demonstrates that with endogenous operating costs, the platform maximizes profitability before market supply saturates. Because profit peaks across a fleet size range rather than at a single point, operators have strategic flexibility in fleet deployment, provided they avoid oversupply.

\begin{remark} \label{remark: zero_subsidy}
\textit{There exists an optimal fleet size for mobility providers to maximize profit in an mobility-energy market when considering operating costs as a function of fleet size, serving as a critical threshold between under-supplied and over-supplied scenarios in fleet size.}
\end{remark}

\begin{figure}[htbp]
    \centering
    \includegraphics[width=0.6\textwidth]{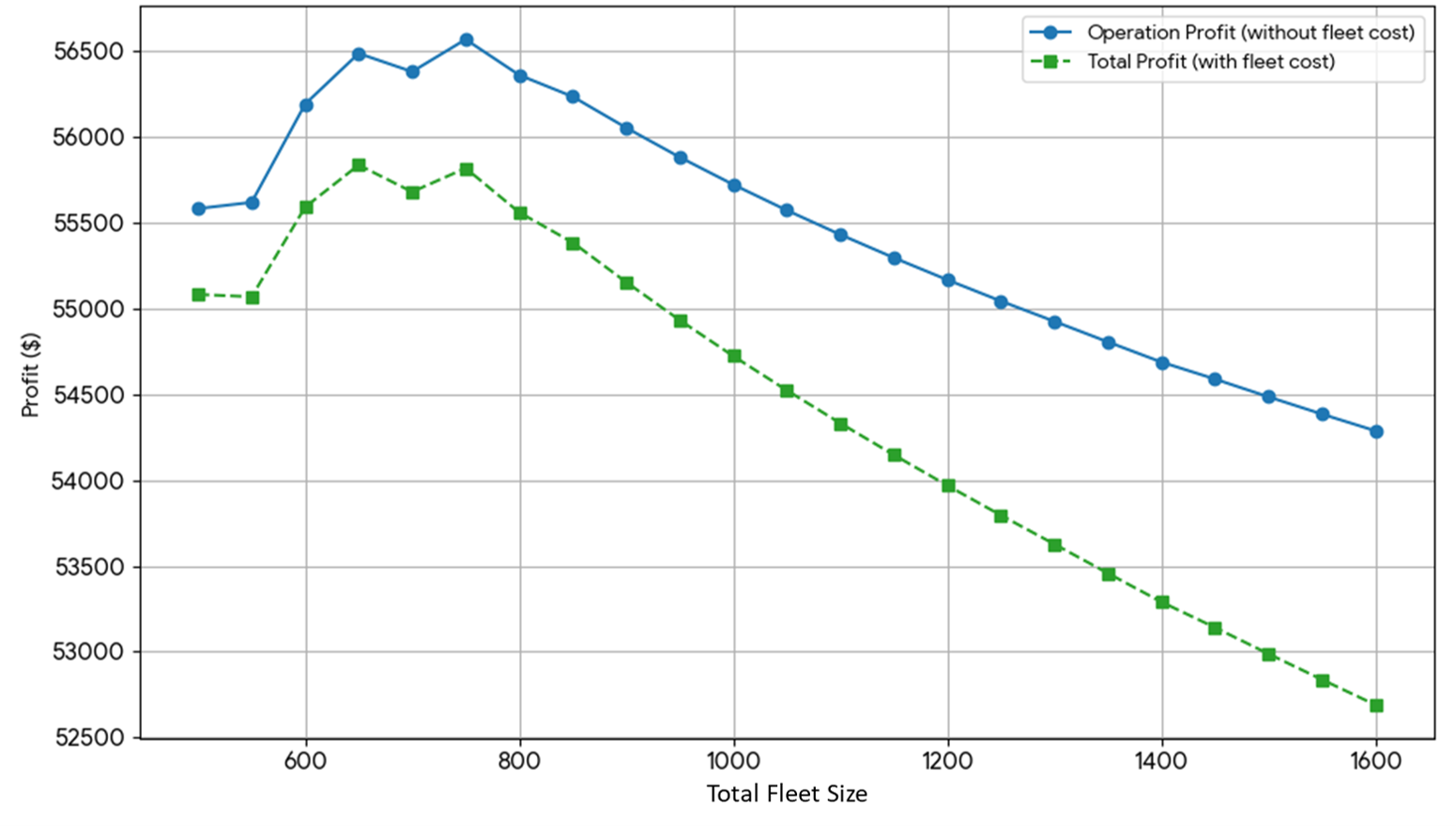}
    \caption{Comparison of Operator Profit and True Profit under varying Fleet Size.}
    \label{fig:profit_analysis}
\end{figure}

\begin{figure}[htbp]
    \centering
    \includegraphics[width=0.6\textwidth]{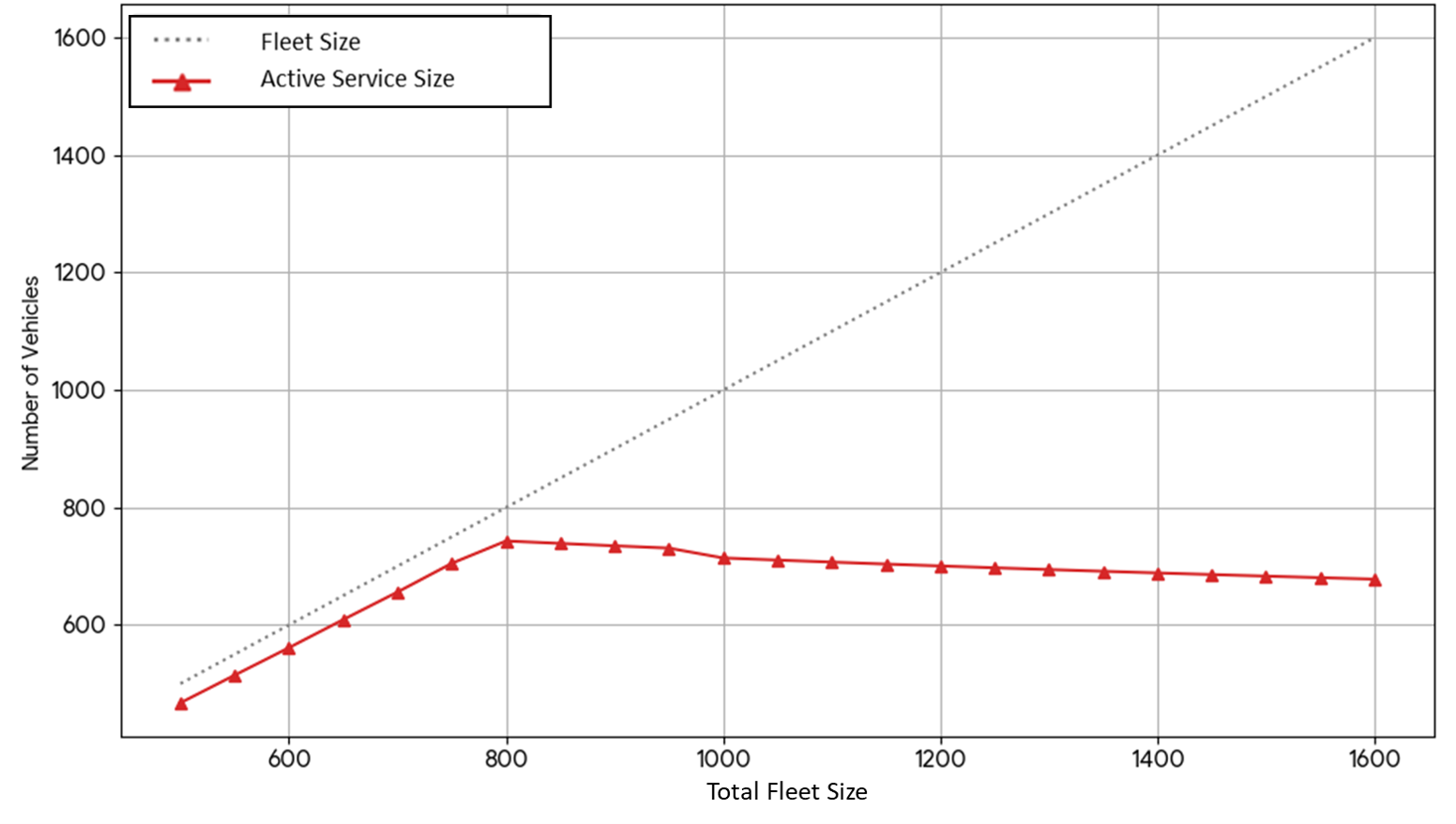}
    \caption{Fleet Utilization Dynamics: Fleet Size vs. Active Service Size.}
    \label{fig:fleet_utilization}
\end{figure}

\textbf{Fig.~\ref{fig:pricing_strategy}} illustrates the optimal time-dependent pricing strategy across varying fleet capacities. This dynamic pricing mechanism manages the transition from supply scarcity to oversupply. Under severe constraints (e.g., 500 vehicles), prices peak to ration the limited fleet. As fleet size increases, the platform lowers peak-hour prices to ensure new vehicles are utilized, thereby increasing total profit. Upon reaching market saturation, the pricing structure stabilizes.

\begin{remark} \label{remark: pricing_pattern}
\textit{The optimal pricing trends across time periods may fluctuate under under-supplied fleet sizes and exhibit steady trends under over-supplied fleet sizes.}
\end{remark}

This insight suggests that for steady pricing planning, it is more reliable to risk over-supply than under-supply in fleet size.

\begin{figure}[htbp]
    \centering
    \includegraphics[width=0.6\textwidth]{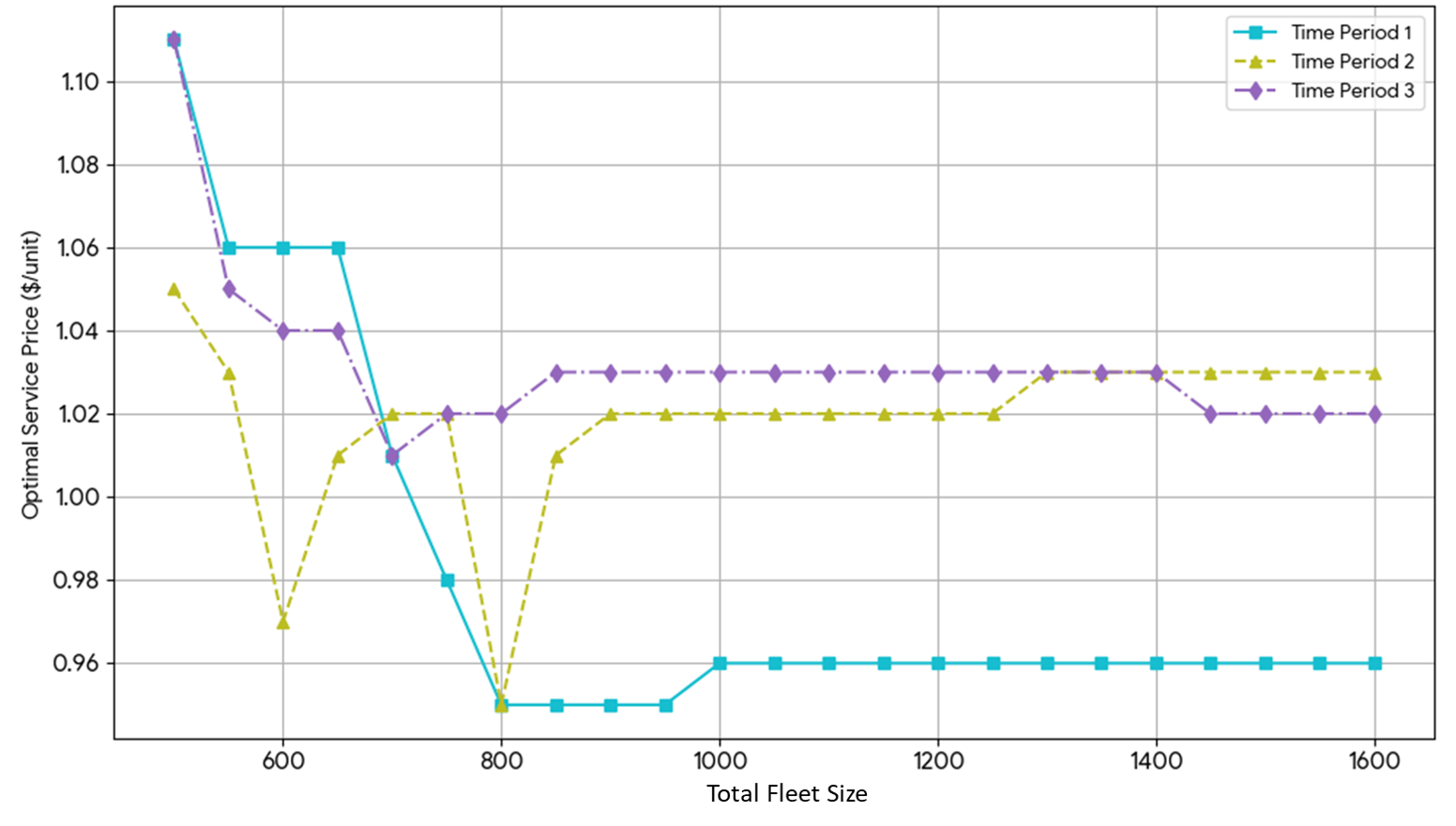}
    \caption{Optimal Dynamic Pricing Strategy per Time Period.}
    \label{fig:pricing_strategy}
\end{figure}

Peak-hour prices (periods 1 and 3) decrease nearly monotonically before stabilizing in the oversupply regime. However, off-peak pricing (period 2) fluctuates before reaching stability, indicating high sensitivity to fleet transitions. During period 2, pricing functions as a secondary buffer to manage residual supply, network rebalancing, and charging demand, rather than driving primary profit. Thus, the system prioritizes peak-period demand capture while leveraging off-peak pricing to maintain operational equilibrium. Despite these dynamic adjustments, optimal prices remain confined between \$0.95 and \$1.11, representing a variance of less than 17\%. This narrow range indicates high price elasticity of demand; aggressive price surging would significantly reduce ridership and revenue. Consequently, maximizing vehicle utilization and trip volume is more critical to profitability than aggressive pricing.

\begin{remark} \label{remark: pricing_dynamics}
\textit{With under-supplied fleet sizes, prices in high-demand periods follow a steady trajectory while low-demand prices fluctuate to accommodate residual supply and system transition.}
\end{remark}

This result suggests that more attention needs to be paid to the optimal pricing in low-demand periods because they are less predictable.

\subsection{Bilevel optimization: heterogeneous MOD fleets}

We expand the scenario to convert the single MOD fleet into two heterogeneous fleets participating in the platform. All final results meet the budget constraints requirement. As illustrated in \textbf{Fig.~\ref{fig:service_network_duo}}, the fleets cover distinct regions with an overlapping service area. Operator 1 (blue region) manages a fleet of 700 vehicles across 6 service nodes, while Operator 2 (orange region) operates 300 vehicles across 4 service nodes. Their service areas overlap at nodes 6 and 9 in the base OOP network. Both fleets adhere to the cost scaling rules defined in \textbf{Eq.~\ref{eq:fleet_cost}} and \textbf{Eq.~\ref{eq:node_cost}}, applying the same elasticity ($\lambda = 0.2$) and reference fleet size ($\underline{V} = 500$). The link attributes for both MOD subnetworks and the passenger demand patterns remain identical to the baseline case.

For the subsequent analyzes, we evaluate two dynamic pricing strategies. The first is an operator-specific distance-based pricing strategy, where each operator independently optimizes its fee in each time period. The second is a platform-wide distance-based strategy, which enforces a single, unified per-distance fee across both fleets in each period serving as a bundled fare.

\begin{figure}[htbp]
    \centering
    \includegraphics[width=0.6\textwidth]{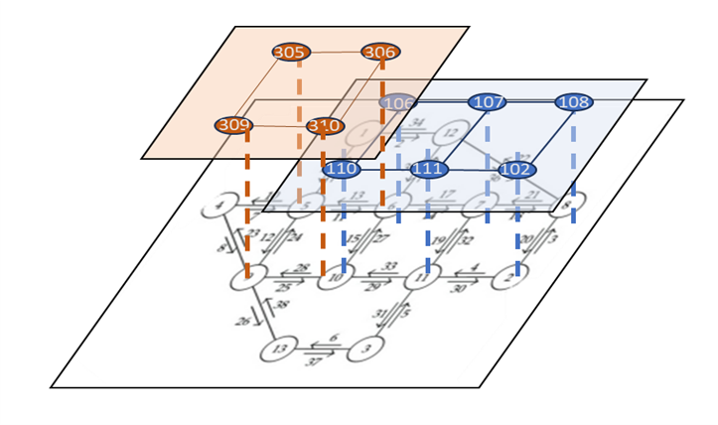}
    \caption{Service network with two MOD service providers.}
    \label{fig:service_network_duo}
\end{figure}

\textbf{Table~\ref{tab:pricing_comparison}} summarizes the two pricing strategies used in the two-fleet scenario and the results from the previous case study with a total fleet size of 1000. The single-fleet case yields the highest profit compared to both profits under the two-fleet scenarios Relative to the single-fleet benchmark, operator-specific pricing with a two-fleet configuration decreases profit by 4.16\%, and profit further decreases by 5.14\% when platform-wide pricing is enforced.

In the two-fleet setting, operator-wide pricing strictly dominates platform-wide pricing from a revenue maximization perspective. Profit increases from \$52,857.06 to \$53,404.41, while the maximum active fleet size falls from 898 to 708, a reduction of 190 vehicles (21.2\%). When the pricing rule varies across operators, the system can internalize differences in effective capacity allocations at the operator-level. The optimal price profiles are consistent with this interpretation. Under operator-wide pricing, operator 2 charges more than operator 1 in all three periods. This pattern indicates that the optimal pricing rule differentiates systematically across operators. In contrast, the platform-wide rule compresses cross-operator variation into a single average price. 

When vehicles are pooled, capacity can be allocated more effectively. The optimal prices under the single-fleet scenario are lower than those under the two-fleet scenarios while obtaining higher total profit. This is mainly due to the reduced capacity allocation cost. Under the two-fleet scenarios, travelers need to switch between operators when making their trips, leading to additional capacity required at the overlapping area. Pooling reduces the inefficiencies created by operator boundaries and allows the system to attain a higher-profit allocation with only a modest peak fleet requirement.

\begin{table}[htbp]
\centering
\caption{Comparison of outcomes under different fleet and pricing structures}
\label{tab:pricing_comparison}
\renewcommand{\arraystretch}{1.15}
\small
\begin{tabularx}{\linewidth}{ll>{\centering\arraybackslash}Xcc}
\toprule
& & \multicolumn{2}{c}{Two fleets} & Single fleet \\
\cmidrule(lr){3-4}
Metric & Detail & \makecell{Operator-Specific\\pricing} & \makecell{Platform-wide\\pricing} & Pricing \\
\midrule
Profit (\$)& -- & 53404.41 & 52857.06 & 55721.60 \\

\multirow{3}{*}{\makecell{Price (\$)}}
& Period 1 & Operator 1: 1.07; Operator 2: 1.28 & 1.10 & 0.96 \\
& Period 2 & Operator 1: 1.03; Operator 2: 1.27 & 1.08 & 1.02 \\
& Period 3 & Operator 1: 1.06; Operator 2: 1.34 & 1.03 & 1.03 \\

\makecell{Maximum active\\fleet size} & -- & 708 & 898 & 714 \\
\bottomrule
\end{tabularx}
\end{table}

\begin{remark} \label{remark: twofleets}
\textit{The proposed model captures the inefficiency of mobility-energy markets with multiple fleet operators and corresponding operating costs of each individual fleet. When bundling fares, the combined fleet size increases to accommodate the common fare used by the platform.}
\end{remark}

\section{Discussion and Conclusion}

The transition to mobility-energy markets shifts urban transportation from a two-sided market to a three-sided ecosystem comprising travelers, mobility providers, and energy providers. This paper proposes a bilevel assignment game to count the additional interdependencies between mobility demand and energy infrastructure. Using a link-based PURC framework, the model achieves high computational efficiency and effectively models the temporal lag of charging demand, enabling scalable, multi-period optimization of fleet distribution, capacity allocation, and dynamic pricing. Using an expanded Nguyen--Dupuis network, the case study shows why mobility operations and energy constraints should be modeled together rather than as separate planning layers: as service demand varies across time, charging demand propagates across periods, and the platform’s strategic decisions must balance traveler utility, operator cost, and infrastructure capacity simultaneously. The case study demonstrates the practical value of the proposed framework as both an operational analysis tool and a long-term planning instrument.

At the operational level, the model identifies clear bottlenecks in specific MOD access links while pinpointing underused locations. At the same time, both charging stations remain heavily used without becoming fully saturated, with the cheaper station attracting consistently higher load. These patterns show that even when total fleet can meet the required demand, performance is still constrained by localized service-node pressure and the spatial distribution of charging demand. In other words, fleet adequacy alone does not guarantee operational efficiency; access capacity and charging resource differences also shape equilibrium outcomes.

At the strategic level, the model shows that profitability is maximized not by only expanding fleet size, but by sizing the fleet near the upper bound of the supply-constrained scenario. As fleet size grows, additional vehicles initially improve demand capture and raise revenue, but the benefit diminishes once demand is fully served. Oversupply widens the gap between total and active fleet, increases idle and repositioning burdens, and reduces utilization, causing total profit to flatten and then decline. Therefore, platform planners should target high service utilization before saturation rather than pursue fleet expansion beyond the point of effective demand absorption.

Optimal prices vary with fleet availability and demand conditions, but remain within a relatively narrow band, indicating that the system is sensitive to price increases and that sharp surges would likely suppress ridership more than they would improve profitability. Prices during high-demand periods decline steadily as the fleet expands, while low-demand prices fluctuate more in response to transitional conditions, residual supply, and charging needs. This implies that in integrated systems, pricing should be interpreted as a tool for coordinating utilization across time. In practice, platform performance depends more on maintaining vehicle productivity and trip volume than on imposing high markups.

Methodologically, the case study also supports the computational efficiency of the framework. For the moderate-size instance, the integrated lower-level problem can be solved directly to global optimality using commercial solvers, while the proposed decomposition and AM heuristic achieves a near-optimal solution with negligible error and substantially shorter runtime. This suggests that the proposed heuristic has strong potential for larger instances.

Several limitations should be acknowledged. First, the current model does not capture non-linear service-charging relationships. Second, the framework does not include the dynamic part of the power grid, such as localized voltage stability and time based electricity pricing fluctuations. Integrating these smart-grid constraints into the charging subproblem would provide a more complete view of the three-sided mobility-energy ecosystem. Furthermore, future models could incorporate heterogeneous energy providers with distinct cost structures and technological capabilities. This inclusion would reflect the growing complexity of modern charging infrastructure, accommodating varying charging speeds and emerging options such as Charging-as-a-Service (CaaS) networks with more explicit charging provider behaviors like delivery fleet allocations and pricing decisions. 

\section*{Acknowledgment}
The authors are grateful for funding support from the U.S. National Science Foundation, grant number CMMI-2423908. 

\bibliographystyle{apalike}
\bibliography{reference}

\clearpage
\section*{Appendix A}
\setcounter{table}{0}
\renewcommand{\thetable}{A\arabic{table}}

\setlength{\LTpre}{6pt}
\setlength{\LTpost}{6pt}

{\renewcommand{\arraystretch}{1.2}

\begin{longtable}{|c|c|c|r|}
\caption{Passenger Demand ($q_{s}^{t}$) per OD Pair and Time Period}
\label{tab:demand_data}\\
\hline
\textbf{Time ($t$)} & \textbf{Origin ($O$)} & \textbf{Destination ($D$)} & \textbf{Demand} \\ \hline
\endfirsthead

\hline
\textbf{Time ($t$)} & \textbf{Origin ($O$)} & \textbf{Destination ($D$)} & \textbf{Demand} \\ \hline
\endhead

\hline
\multicolumn{4}{r}{Continued on next page} \\
\endfoot

\hline
\endlastfoot

1 & 1 & 2 & 600 \\ \hline
1 & 1 & 3 & 1200 \\ \hline
1 & 4 & 2 & 900 \\ \hline
1 & 4 & 3 & 300 \\ \hline
1 & 2 & 1 & 120 \\ \hline
1 & 3 & 1 & 240 \\ \hline
1 & 2 & 4 & 180 \\ \hline
1 & 3 & 4 & 60 \\ \hline

2 & 1 & 2 & 180 \\ \hline
2 & 1 & 3 & 360 \\ \hline
2 & 4 & 2 & 270 \\ \hline
2 & 4 & 3 & 90 \\ \hline
2 & 2 & 1 & 180 \\ \hline
2 & 3 & 1 & 360 \\ \hline
2 & 2 & 4 & 270 \\ \hline
2 & 3 & 4 & 90 \\ \hline

3 & 1 & 2 & 120 \\ \hline
3 & 1 & 3 & 240 \\ \hline
3 & 4 & 2 & 180 \\ \hline
3 & 4 & 3 & 60 \\ \hline
3 & 2 & 1 & 600 \\ \hline
3 & 3 & 1 & 1200 \\ \hline
3 & 2 & 4 & 900 \\ \hline
3 & 3 & 4 & 300 \\ \hline

\end{longtable}
}

{\small
\renewcommand{\arraystretch}{1.2}

\begin{xltabular}{\textwidth}{|>{\raggedright\arraybackslash}X|c|}
\caption{Sets, Parameters, and Variables of the Integrated eMaaS Model}
\label{tab:np_input}\\
\hline
\textbf{Input element} & \textbf{Value} \\ \hline
\endfirsthead

\hline
\textbf{Input element} & \textbf{Value} \\ \hline
\endhead

\hline
\multicolumn{2}{r}{Continued on next page} \\
\endfoot

\hline
\endlastfoot

Service node capacity limit & 300 \\ \hline
Charging capacity limit & 300 \\ \hline
Total fleet size & 1600 \\ \hline
Operational cost per unit length on MOD link & \$0.2/mile \\ \hline
Travel time on base link & 15 miles/h \\ \hline
Travel time on MOD link & 25 miles/h \\ \hline
Traveler value of time (VOT) & \$20/h \\ \hline
Service price per unit length on MOD link & \$0.5/mile \\ \hline
Fixed link price per link usage in OOP network & \$1 \\ \hline
Operator cost per unit length on recharge network link & \$0.1/mile \\ \hline
Recharge price per usage on node 207 & \$4 \\ \hline
Recharge price per usage on node 209 & \$2 \\ \hline
Unavailable fleet buffer ratio & 0.2 \\ \hline
Temporal propagation factor from service demand to charging load & [0.05, 0.1, 0.2] \\ \hline
Overall weight between service and recharge disutility & 0.01 \\ \hline
Service dispersion weight & 1 \\ \hline
Traveler utility weight & 1 \\ \hline
Operator utility weight & $5\times 10^{-4}$ \\ \hline
Recharge dispersion weight & 2 \\ \hline
Operator recharge utility weight & 1 \\ \hline
Recharge station utility weight & $10^{-3}$ \\ \hline

\end{xltabular}
}

\end{document}